\documentclass[runningheads,envcountsame]{llncs}

\usepackage{graphicx} 
\usepackage{amsmath, amssymb}
\usepackage{todonotes}
\usepackage{tikz}
\usepackage{comment}
\usetikzlibrary{automata,positioning,arrows.meta}

\definecolor{stateyellow}{RGB}{219,169,26}
\definecolor{stateteal}{RGB}{34,190,200}
\definecolor{edgepurple}{RGB}{120,0,200}
\definecolor{stateviolet}{RGB}{170,90,210}
\definecolor{edgeturquoise}{RGB}{70,210,200}

\DeclareMathOperator{\ta}{\mathtt{a}}
\DeclareMathOperator{\tb}{\mathtt{b}}

\usetikzlibrary{automata, positioning, arrows.meta}

\DeclareMathOperator{\N}{\mathbb{N}}

\title{Bandwidth of Nondeterministic Finite Automata}
\author{Da-Jung Cho\inst{1} \and Szil\'ard Zsolt Fazekas\inst{2} \and Daihei Ise\inst{3} \and Shinnosuke Seki\inst{3} \and \\ Wataru Tamehira\inst{3} \and Max Wiedenh\"oft\inst{4}}

\authorrunning{D-J.~Cho et al. }

\titlerunning{Bandwidth of Nondeterministic Finite Automata}

\institute{Department of Software and Computer Engineering, Ajou University, \\ Republic of Korea
\email{dajungcho@ajou.ac.kr} \and
Graduate School of Engineering Science, Akita University, Japan
\email{szilard.fazekas@ie.akita-u.ac.jp} \and
University of Electro-Communications, Tokyo, Japan \\
\email{\{daihei.ise, s.seki\}@uec.ac.jp}, \email{t2210411@gl.cc.uec.ac.jp} \and
Department of Computer Science, Kiel University, Kiel, Germany\\
\email{maw@informatik.uni-kiel.de}}

\begin{document}

\maketitle
\begin{abstract}
Co-transcriptional splicing generates RNA sequences from a DNA template by deleting subsequences nondeterministically. 
Recent work showed how to encode an NFA into such a template, but the construction requires deleting subsequences whose length grows with the distance between states, which makes such deletions unlikely under the local nature of co-transcriptional splicing.
We introduce $k$-bandwidth NFAs, in which transitions span at most $k$ states. 
These automata form a strict hierarchy of language classes. 
For finite languages, bandwidth~$2$ suffices, and bandwidth $1$ can be decided in polynomial-time when the language is presented as a list of words. 
Minimizing the bandwidth is NP-hard even for fixed $k \geq 2$.
\end{abstract}
\everypar{\looseness=-1}
    \section{Introduction}

Co-transcriptional splicing (Fig.~\ref{fig:cotranscriptional_splicing}) is a process in which an RNA sequence is edited by letting its subsequences fold into a hairpin and cut from its root while being sequentially synthesized from its DNA template, or \textit{transcribed}~\cite{MerkhoferHJ14}. 
Co-transcriptional \textit{folding}, by which hairpins to be spliced out form therein, has been proven programmable \textit{in vitro} for self-assembly of a 2-dimensional tile-like structure \cite{GearyRA14} and \textit{in papyro} for universal computation \cite{GearyMSS18}. 
The former architecture (\textit{RNA origami}) demonstrated how to program a chain of folding of constitutional hairpins and their interactions into an RNA sequence in such a way that the sequence co-transcriptionally folds according to the event chain. 
With rapid progress in experiments in mind, it is not unrealistic to envision applications of co-transcriptional splicing. 

\begin{figure}[tb]
\centering
\includegraphics[scale=0.35]{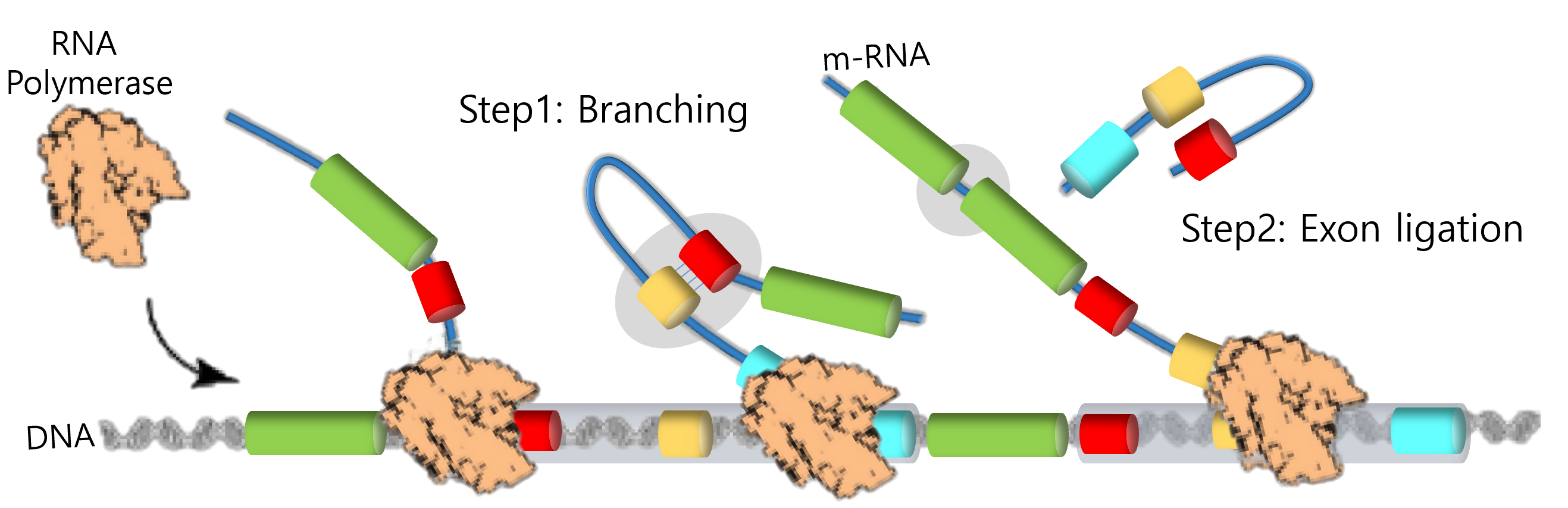}
\caption{Co-transcriptional splicing. 
While an RNA transcript is synthesized (\textit{transcribed}), a hairpin forms via hybridization of its complementary subsequences (colored in red and yellow) and spliced out. 
}
\label{fig:cotranscriptional_splicing}
\end{figure}

For instance, artificial molecular systems generate RNA sequences from DNA templates during transcription to reduce cost, since DNA is cheaper to synthesize.
As of now, many templates have to be ordered as target RNA sequences, but what if multiple targets could be accommodated on a single template? 

Co-transcriptional splicing can be modeled as a greedy deletion of hairpins one after another from a sequence~\cite{ChoFSW25_NC}. 
A hairpin is represented, using two sequences $x, \ell$ and an antimorphic involution $\theta$, as $x \ell \theta(x)$, where $x$ hybridizes with its ``Watson-Crick complement'' $\theta(x)$ into a stem, leaving $\ell$ as a loop. 
Using this model, we have recently demonstrated how to encode a nondeterministic finite automaton into a circular template $w$ in such a way that all words accepted by the NFA can be co-transcriptional spliced out of $w^*$ \cite{ChoFSW25}. 
The states $q_1, q_2, \cdots, q_n$ of a target NFA are encoded as sequences over a 4-letter alphabet and concatenated together with a special factor $s_e$ for termination into the template as $w = s_1 s_2 \cdots s_n s_e$ in such a way that 
\begin{enumerate}
    \item $s_i \cdots s_{j-1}$ can be decomposed as $h_1 a h_2$ for a letter $a$ and two hairpins $h_1, h_2$ as long as the NFA can transition from state $q_i$ to state $q_j$ by reading $a$; 
    \item $h_1$ is a prefix of $s_i$; 
    \item $h_2$ does not share its start point with any other hairpin. 
\end{enumerate}
Starting at the beginning of $s_1$, one of the outgoing transitions from $q_1$, or particularly, their prefix $h_1$'s is chosen nondeterministically and deleted, guiding the input head to the letter to be read, and then $h_2$ is deleted deterministically, guiding the head to the beginning of the encoding of destination state; the transition from $q_i$ to $q_j$ by $a$ is thus implemented. 
A drawback of this architecture is that $h_2$ can be arbitrarily long. 
The most serious threat is self-loop. 
In addition, the more outgoing transitions a state $q_i$ has, the longer the corresponding encoding~$s_i$ becomes. 
From a biological perspective, such long-range interactions are unlikely to occur. 
Co-transcriptional splicing operates in a local and greedy manner and acts on splice sites as soon as they are transcribed. 
This limits interactions between distant regions of the transcript and motivates restricting transitions to nearby states along the template.
This leads to a linearization problem: given an NFA, can its states be arranged along the template so that every transition spans only a bounded number of segments?

\begin{figure}[tb]
    \centering
    \resizebox{0.55\textwidth}{!}{
        \begin{tikzpicture}[outline/.style={draw, thick}]
        \node[fill=white, inner sep=0pt] at (0, 4.5) {$\vdots$};
  
        \draw[outline] (0, 1.35) ellipse (3.8cm and 2.6cm) node[fill=white, inner sep=2pt] at (0, 3.95) {$\mathcal{L}_{k+1}$};
        \node[fill=white, inner sep=0pt] at (0, 3.5) {$\tb^*(\ta\tb^*)^k$};

        \draw[outline] (0, 0.95) ellipse (3.5cm and 2.1cm) node[fill=white, inner sep=2pt] at (0, 3.05) {$\mathcal{L}_k$};
        \node[fill=none, inner sep=0pt] at (0, 2.75) {$\vdots$};

        \draw[outline] (0, 0.6) ellipse (3.2cm and 1.6cm) node[fill=white, inner sep=2pt] at (0, 2.2) {$\mathcal{L}_{3}$};
        \node[fill=white, inner sep=0pt] at (0, 1.8) {$\tb^*(\ta\tb^*)^2$};

        \draw[outline] (0, 0.2) ellipse (2.8cm and 1.1cm) node[fill=white, inner sep=2pt] at (0, 1.35) {$\mathcal{L}_{2}$};
        \node[fill=white, inner sep=0pt] at (0, 0.9) {$\tb^*\ta\tb^*$};

        \draw[outline] (-0.8, 0) ellipse (1.7cm and 0.55cm) node[fill=white, inner sep=2pt] at (-1.0, 0.55) {$\mathcal{L}_1$};
        \draw[outline] (0.9, 0) ellipse (1.7cm and 0.55cm) node[fill=white, inner sep=2pt] at (1.1, 0.55) {$\mathcal{F}$};
        \node[fill=white, inner sep=1.5pt] at (-1.5, 0) {$\ta^*$};
        \node[fill=white, inner sep=1.5pt] at (1.65, 0) {$\{\ta, \tb\tb\}$};
        \node[fill=white, inner sep=1.5pt] at (0.05, 0) {$\{\ta\tb, \tb\tb\}$};
    \end{tikzpicture}
    }
    \caption{A relationship among  $\mathcal{L}_1,\mathcal{L}_2,\ldots$ and $\mathcal{F}$, established as Proposition \ref{proposition:proper-hierarchy}, Lemma \ref{lemma:finite-existence-2BW}, and Lemma \ref{lemma:finite-language-not-in-L1}.}
    \label{fig:relationship}
\end{figure}

The goal of this paper is to introduce the notion of \emph{$k$-bandwidth NFAs}, whose states can be ordered in such a way that all transitions skip at most $k-1$ states in-between, and investigate their properties. 
We first show an infinite hierarchy among the class of such NFAs with respect to the bandwidth $k$ (Fig.~\ref{fig:relationship}). 
The class $\mathcal{F}$ of finite languages is of practical interest as no molecular system involves infinitely many kinds of RNA sequences. 
In this line, we will show how to convert an arbitrary finite language into a bandwidth-$2$ NFA that accepts it (Lemma~\ref{lemma:finite-existence-2BW}). 
This result turns out to be optimal in the sense that there exists a finite language that cannot be accepted by any bandwidth-1 NFA (Lemma~\ref{lemma:finite-language-not-in-L1}), but we provide a polynomial-time bandwidth-1 test for finite languages (Theorem~\ref{theorem:existence-bw-nfa-k1}).
Minimizing the bandwidth turns out to be NP-hard even for fixed $k \geq 2$ (Theorem~\ref{theorem:minization-hard-fixed-k}).

    \section{Preliminaries}

Let $\N$ denote the set of positive integers.
For some $m\in\N$, denote by $[m]$ the set $\{1,2,...,m\}$.
With $\Sigma$, we denote a finite set of symbols, called an \emph{alphabet}.
The elements of $\Sigma$ are called \emph{letters}.
A \emph{word} over $\Sigma$ is a finite sequence of letters from $\Sigma$.
With $\varepsilon$, we denote the \emph{empty word}.
The $i$-th element of a word $w$ is denoted by $w[i]$.
The set of all words over $\Sigma$ is denoted by $\Sigma^*$.
Additionally, set $\Sigma^+ = \Sigma^*\setminus\{\varepsilon\}$.
The \emph{length} of a word $w\in\Sigma^*$, i.e., the number letters in $w$, is denoted by $|w|$; hence, $|\varepsilon| = 0$.
For some $w\in\Sigma^*$, if we have $w = xyz$ for some $x,y,z\in\Sigma^*$, then we say that $x$ is a \emph{prefix}, $y$ is a \emph{factor}, and $z$ is a \emph{suffix} of $w$.
A nondeterministic finite automaton~(NFA) is a tuple $A = (\Sigma, Q, q_0,\delta, F)$ where $\Sigma$ is the input alphabet, $Q$ is the finite set of states, $\delta \colon Q \times \Sigma \rightarrow 2^Q$ is the multivalued transition function, $q_0 \in Q$ is the initial state, and $F \subseteq Q$ is the set of final states.
In the usual way, $\delta$ is extended as a function $Q \times \Sigma^* \rightarrow 2^Q$ and the language accepted by $A$ is $L(A) = \{ w \in \Sigma^* \mid \delta(q_0, w) \cap F \neq \emptyset \}$.

\begin{definition}[$k$-bandwidth NFA]\label{definition:k-bandwidth-NFA}
    Given $k\in\N$, an NFA $A = (Q, \Sigma, \delta, q_0, F)$ is \emph{$k$-bandwidth} ($k$-BW-NFA) if the states can be indexed as $Q = \{q_0,...,q_{n-1}\}$ such that for all $q_i,q_j\in Q$ and $\ta\in\Sigma\cup\{\varepsilon\}$,  $q_j\in \delta(q_i,\ta)$ implies $0 < (j-i) \bmod n \leq k $.
\end{definition}

This definition corresponds to the notion of distance bounded NFA from~\cite{ChoFSW25}, where they just have been briefly considered and not formally investigated in a lot of detail. As bandwidth is an established name for the underlying graph property \cite{ChinnCDG82}, we have chosen this name in this paper. Intuitively, the bandwidth $k$ defines a maximum distance that can be traversed by a single transition between states. This results in either a \emph{line of states} where transitions may \emph{jump} a certain number of states ahead, or it results in a \emph{circle of states} that can be traversed in a similar manner, but repeatingly.

\subsection{Constantly-bounded co-transcriptional splicing as a programming platform}

Motivated by the practical problems discussed in the introduction, we introduce the following problems. First, the most general one asks whether a given regular language represented by some NFA can be obtained with a given bandwidth $k$.

\begin{problem}[Linearization of NFA]\label{prob:lin-nfa}
    Given an NFA $A$ and a number $k \geq 1$, does there exist a $k$-BW-NFA $B$ such that $L(A) = L(B)$?
\end{problem}

In many cases, especially in the context of synthesizing a certain set of RNA sequences, we do not need the full power of NFA to express target/input languages. Hence, we consider the following subproblem as highly practically relevant.

\begin{problem}[Linerization of Finite Languages]\label{prob:lin-fin-lang}
    Given some finite language $L_f\subset\Sigma^*$ and a number $k \geq 1$, does there exist a $k$-BW-NFA $A$ such that $L_f = L(A)$? Does there exist a constant bound for $k$ such that all finite languages lie in the class of languages that can be expressed with $k$-BW-NFA?
\end{problem}

Up to this point, we ask basic existence questions but do not specify any constraints with respect to the size of the obtained automata. Taking the embedding in co-transcriptional splicing into account, minimizing the size of the automata and, thus, minimizing the necessary size of the encoding as a DNA sequence also remains of highly practical relevance, as shorter DNA strands promise more stability and reduced cost. Deciding the existence of some $n$ state $k$-BW-NFA $A$ for a target finite language $L_f$ has been shown to be NP-hard if in addition to $n$ and $L_f$, $k$ is also left as part of the input. Hence, we wondered whether this problem can be efficiently decided if $k$ is bound by a constant, effectively resulting in an FPT algorithm for the following problem.

\begin{problem}[Minimization of $k$-BW-NFA]\label{prob:min-bw-nfa}
    Let $k\geq 1$ be a constant. Given a finite language $L_f$ and a number $n \geq 1$, does there exist a $k$-BW-NFA $A$ with $n$ states such that $L(A) = L_f$?
\end{problem}

As mentioned in the introduction, for all of the highly relevant problems we obtained answers with respect to their computational complexity. The more general variants, as to be seen, remain open for future research. We continue with some general observations regarding the accepted languages by $k$-BW-NFA.

\section{The Hierarchy of $k$-Bandwidth NFAs}

We begin with a central result that the classes recognizable by $k$-bandwidth NFAs form a strict hierarchy based on the selection of $k$. Consider the following definition. 
Let $\mathcal{L}_k$ be the class of languages recognizable by $k$-BW-NFAs.

\begin{proposition}\label{proposition:proper-hierarchy}
    The proper inclusion $\mathcal{L}_k \subsetneq \mathcal{L}_{k+1}$ holds, for any $k$.
\end{proposition}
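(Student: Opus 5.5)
The inclusion $\mathcal{L}_k \subseteq \mathcal{L}_{k+1}$ follows directly from Definition~\ref{definition:k-bandwidth-NFA}: an indexing that witnesses bandwidth $k$ satisfies $0<(j-i)\bmod n\le k\le k+1$, so it also witnesses bandwidth $k+1$. For strictness we use the witness suggested in Fig.~\ref{fig:relationship}, $L_k=\tb^*(\ta\tb^*)^k$, the set of words with exactly $k$ occurrences of $\ta$. We show $L_k\in\mathcal{L}_{k+1}\setminus\mathcal{L}_k$.

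\emph{Upper bound.} Take $n=2k+2$ states on the circle, in the order $x_0,x_1,\dots,x_k,y_0,y_1,\dots,y_k$. The index of a state records the number of $\ta$'s read so far.
\begin{itemize}
\item The $\tb$-transitions are $x_i\to y_i$ and $y_i\to x_i$. Each has forward cyclic distance $k+1$, so they replace the forbidden self-loops by a $2$-cycle that winds once around the circle.
\item The $\ta$-transitions are $x_i\to x_{i+1}$ and $y_i\to y_{i+1}$ for $i<k$, each of distance $1$.
\end{itemize}
The initial state is $x_0$ (index $0$) and the final states are $x_k,y_k$. No run can read more than $k$ $\ta$'s, and the $\tb$'s may be interleaved freely, so the accepted language is exactly $L_k$. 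All transitions have distance at most $k+1$, so this is a $(k+1)$-BW-NFA.

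\emph{Lower bound.} Suppose a $k$-BW-NFA $A$ with $n$ states accepts $L_k$. The first step is a structural observation: every cycle in a bandwidth-$k$ automaton has positive total forward displacement, so it winds around the circle at least once. Since each step advances by at most $k$, such a cycle meets every window of $k$ cyclically consecutive indices.

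Next, consider an accepting run of $A$ on $\tb^m(\ta\tb^m)^k$ with $m\ge n$. Each of the $k+1$ blocks $\tb^m$ repeats a state, so block $i$ (for $i=0,\dots,k$) contains a $\tb$-labelled cycle $C_i$ traversed at level $i$.

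We then claim the cycles $C_0,\dots,C_k$ are pairwise vertex-disjoint. Suppose $q\in C_i\cap C_j$ with $i<j$. Follow the run up to a visit of $q$ inside block $i$, and then continue with the part of the run after a visit of $q$ inside block $j$. This spliced run accepts a word with $i+(k-j)<k$ occurrences of $\ta$, contradicting $L(A)=L_k$.

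Finally, fix any window of $k$ consecutive indices. By the structural observation it contains at least one state of each of the $k+1$ disjoint cycles. That would put $k+1$ distinct states into $k$ positions, a contradiction. Hence $L_k\notin\mathcal{L}_k$.

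The main care point is the cycle-covers-every-window fact. It needs the cyclic displacements of the cycle's steps, each in $[1,k]$, to sum to a positive multiple of $n$. The cycle then passes every position at least once and cannot jump over $k$ consecutive positions. If $n<k$ the window argument degenerates, but then $n<k+1$ already contradicts the existence of $k+1$ disjoint nonempty cycles.
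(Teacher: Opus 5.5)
Your proof is correct and follows the paper's argument essentially step for step: the same witness $\tb^*(\ta\tb^*)^k$, the same $(2k+2)$-state construction with $\tb$-transitions of cyclic length $k+1$, and the same lower bound via pairwise disjoint $\ta$-free cycles in the $k+1$ blocks of $\tb$'s, each of which must meet every window of $k$ consecutive states. The differences are minor: you rule out a shared state by splicing out the run segment between its two visits (which yields fewer than $k$ occurrences of $\ta$), whereas the paper pumps that $\ta$-containing loop up; and you spell out the winding argument behind the window claim, which the paper only asserts.
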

\begin{proof}
    \begin{figure}[tb]
    \centering
    \begin{minipage}{0.5\linewidth}
    \scalebox{0.8}{\begin{tikzpicture}[shorten >=1pt, node distance=2cm, on grid, auto, >=Stealth]
        \node[state, initial] (q0) {$q_0$};
        \node[state] (q1) [right=of q0] {$q_1$};
        \node (dots1) [right=of q1] {$\dots$};
        \node[state, accepting] (qk) [right=of dots1] {$q_k$};
  
        \node[state] (qk1) [above=1.7cm of q0] {$q_{k+1}$};
        \node[state] (qk2) [right=of qk1] {$q_{k+2}$};
        \node (dots2) [right=of qk2] {$\dots$};
        \node[state, accepting] (q2k1) [right=of dots2] {$q_{2k+1}$};

        \path[->]
            (q0) edge node {$\ta$} (q1)
                 edge [bend left=15] node {$\tb$} (qk1)
            (q1) edge node {$\ta$} (dots1)
                 edge [bend left=15] node {$\tb$} (qk2)
            (dots1) edge node {$\ta$} (qk)
            (qk) edge [bend left=15] node {$\tb$} (q2k1)

            (qk1) edge node {$\ta$} (qk2)
                  edge [bend left=15] node {$\tb$} (q0)
            (qk2) edge node {$\ta$} (dots2)
                  edge [bend left=15] node {$\tb$} (q1)
            (dots2) edge node {$\ta$} (q2k1)
            (q2k1) edge [bend left=15] node {$\tb$} (qk);
    \end{tikzpicture}}
    \end{minipage}
    \begin{minipage}{0.05\linewidth}
        \ \\
    \end{minipage}
    \begin{minipage}{0.4\linewidth}
\scalebox{0.8}{\begin{tikzpicture}[shorten >=1pt, node distance=2cm, on grid, auto, >=Stealth]
        \node[state, initial] (q0) {$q_0$};
        \node (dots) [right=of q0] {$\dots$};
        \node[state, accepting] (qk) [right=of dots] {$q_k$};
        \path[->]
            (q0) edge [loop above] node {$\tb$} (q0)
                 edge node {$\ta$} (dots)
            (dots) edge node {$\ta$} (qk)
            (qk) edge [loop above] node {$\tb$} (qk);
    \end{tikzpicture}}
            
    \end{minipage}
    \caption{(Left) The construction of the $(k+1)$-BW-NFA accepting $\tb^*(\ta\tb^*)^k$. The double circles represents its accepting states.
    (Right) The NFA accepting $\tb^*(\ta\tb^*)^k$.
    }
    \label{fig:construction-with-and-without-selfloop}
    \end{figure}

    Consider the regular expression $\tb^*(\ta\tb^*)^k$.
    We will prove $\tb^*(\ta\tb^*)^k \in \mathcal{L}_{k+1} \setminus \mathcal{L}_k$.
    Firstly, we can see $\tb^*(\ta\tb^*)^k \in \mathcal{L}_{k+1}$ by Fig.~\ref{fig:construction-with-and-without-selfloop} (Left) which illustrates a $(k+1)$-BW-NFA accepting the language.
    Note that while the NFA illustrated in Fig.~\ref{fig:construction-with-and-without-selfloop} (Right) accepts the language, it is not a $(k+1)$-BW-NFA.
    Secondly, we will prove $\tb^*(\ta\tb^*)^k \not\in \mathcal{L}_{k}$ by contradiction.
    Assume $\tb^*(\ta\tb^*)^k \in \mathcal{L}_{k}$.
    Then, there exists a $k$-BW-NFA $A = (Q, \Sigma, \delta, q_0, F)$ with $Q = \{q_0, q_1, \ldots, q_{n-1}\}$ such that $L(A)=\tb^*(\ta\tb^*)^k$.
    Consider the sequence $w=\beta(\ta \beta)^k \in L(A)$ such that $\beta=\tb^{n+1}$.
    Then, while reading each occurrence of $\beta$, the automaton visits at least one state more than once.
    For each $\ell$ with $1 \leq \ell \leq k+1$,
    let $q_\ell'$ be a state that the automaton visits at least twice while reading the $\ell$-th occurrence of $\beta$.
    We define $C_\ell$ as the set of all states that the cycle (from $q_\ell'$ to $q_\ell'$) includes.
    
    Then, we will prove $C_i \cap C_j = \emptyset$ for any $i \neq j$ by contradiction.
    Assume $C_i \cap C_j \neq \emptyset$ for some $i \neq j$.
    Without loss of generality, we can assume $i<j$.
    Then, there exists a common state $q \in C_i \cap C_j$.
    Therefore, the automaton $A$ includes a loop from $q$ to $q_i$, then via an $\ta$-transition to $q_j$, and finally back to $q$.
    This loop includes the symbol $\ta$, which allows the automaton $A$ to accept words with more than $k$ occurrences of $\ta$,
    which contradicts the definition of $\tb^*(\ta\tb^*)^k$.
    Therefore, we have $C_i \cap C_j = \emptyset$ for any $i \neq j$.

    Let $Q_{x,y} \; (\subseteq Q)$ be the set of $y$ consecutive states starting from $q_x$, defined as $Q_{x,y}=\{ q_{(x+i) \bmod n} \mid i=0,\ldots,y-1 \}$.
    Let $\ell\in\{ 1,\ldots,k+1 \}$ be any integer.
    Then, since $A$ is a $k$-BW-NFA,
    for all $q_i,q_j\in C_\ell$ and $\ta\in\Sigma\cup\{\varepsilon\}$,  $q_j\in \delta(q_i,\ta)$ implies $0 < (j-i) \bmod n \leq k $.
    Therefore, for any integer $m\in\{ 1,2,\ldots \}$,
    the set $Q_{m,k}$ includes at least one element from $C_\ell$.
    Since $C_i \cap C_j = \emptyset$ holds for any $i \neq j$,
    the set $Q_{m,k}$ includes at least $k+1$ distinct states,
    which contradicts the fact $|Q_{m,k}|=k$.
    Therefore, $\tb^*(\ta\tb^*)^k \not\in \mathcal{L}_{k}$ holds.
\qed
\end{proof}

Hence, we know that for each fixed $k$, the class $\mathcal{L}_k$ is indeed a sub-regular language and a distinct class of languages compared to all $\mathcal{L}_{k'}$ where either $k' < k$ or $k' > k$. Related to the practical motivation for this model this implies that maximizing the achievable size of a removed factor by co-transcriptional splicing also strictly increases the languages that can be expressed using this operation.

    \section{Finite Languages and $k$-Bandwidth NFAs}

In this section, we consider the class of finite languages over $k$-BW-NFAs. From a practical viewpoint, this class of languages seems most interesting as a set of target RNA sequences encoded on DNA templates is typically finite.
Let $\mathcal{F}$ be the set of all finite languages. First, we can show for finite languages, a sort of trivial concatenation of words can be done to obtain a $k$-BW-NFA that represents this language. Indeed, even for a very small constant, $k = 2$, we can encode any finite language. The construction uses an alternating structure between two paths, one for reading the current word and the other one to skip ahead to a point after the current word is read.

\begin{lemma}\label{lemma:finite-existence-2BW}
    We have $\mathcal{F} \subseteq \mathcal{L}_2$.
\end{lemma}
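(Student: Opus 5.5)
We give a direct construction that uses the fact that Definition~\ref{definition:k-bandwidth-NFA} explicitly allows transitions labelled by $\ta\in\Sigma\cup\{\varepsilon\}$. Let $L_f=\{w_1,\dots,w_m\}$ be finite, and write $w_i=a_{i,1}a_{i,2}\cdots a_{i,\ell_i}$. We build a line of states (no wrap-around is needed) made of $m$ consecutive blocks. Block $i$ realises two interleaved paths. The \emph{reading path} spells out $w_i$. The \emph{skip path} lets a run pass over block $i$ without consuming input. The skip paths of all blocks are chained into a spine, so that from the initial state one can reach the start of any block by $\varepsilon$-moves.

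\textbf{The construction.} Introduce spine states $c_0,c_1,\dots,c_m$; $c_0$ is the initial state. For block $i$ with $\ell_i\ge 1$, introduce reading states $x_{i,1},\dots,x_{i,\ell_i}$ and skip states $y_{i,1},\dots,y_{i,\ell_i}$. Place the states of block $i$ on the line in the order
\[
c_{i-1},\,x_{i,1},\,y_{i,1},\,x_{i,2},\,y_{i,2},\,\dots,\,x_{i,\ell_i},\,y_{i,\ell_i},
\]
and let $c_i$ be the next state on the line, where block $i+1$ begins. The transitions are as follows.
\begin{itemize}
\item Reading path: $c_{i-1}\xrightarrow{a_{i,1}}x_{i,1}$ and $x_{i,j}\xrightarrow{a_{i,j+1}}x_{i,j+1}$.
\item Skip path: $c_{i-1}\xrightarrow{\varepsilon}y_{i,1}$, $y_{i,j}\xrightarrow{\varepsilon}y_{i,j+1}$, and $y_{i,\ell_i}\xrightarrow{\varepsilon}c_i$.
\end{itemize}
The final states are all $x_{i,\ell_i}$, together with $c_0$ if $\varepsilon\in L_f$; blocks with $\ell_i=0$ are simply omitted. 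Checking the layout, each transition goes forward by exactly $1$ or $2$ positions. Hence the automaton is a $2$-BW-NFA, since $0<(j-i)\bmod n\le 2$ holds for every transition.

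\textbf{Correctness.} For $\supseteq$: to accept $w_i$, follow the $\varepsilon$-spine from $c_0$ through the skip paths of blocks $1,\dots,i-1$ to $c_{i-1}$, then read $w_i$ along $x_{i,1},\dots,x_{i,\ell_i}$. For $\subseteq$: the only letter-consuming transitions are on reading paths. Once a run enters $x_{i,1}$, it can only continue along $x_{i,2},\dots$, because reading states have no other outgoing transitions and $x_{i,\ell_i}$ has none at all. So every accepting run consists of $\varepsilon$-moves to some $c_{i-1}$ followed by exactly $w_i$. If $\varepsilon\in L_f$ it is accepted at $c_0$. Conversely, the only way to accept without reading is at $c_0$, since the $c_i$ for $i\ge 1$ and the $y$'s are non-final. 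Therefore $L(A)=L_f$.

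\textbf{Main obstacle.} The delicate point is whether $\varepsilon$-transitions are legitimate here. The definition quantifies over $\Sigma\cup\{\varepsilon\}$, so I rely on this. If an $\varepsilon$-free version were demanded, I would first try removing $\varepsilon$-moves by letting the skip path carry the first letter instead. Concretely, I would replace the spine by transitions $c_{i-1}\xrightarrow{b}$ (the analogous skip state) for each letter $b$, so that reading begins one block later. This is the step whose distance bookkeeping would need the most care. I expect the $\varepsilon$-based construction above to be the one intended.
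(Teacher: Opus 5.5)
Your construction is correct and is essentially the paper's. Each word gets a reading path and an $\varepsilon$-skip path interleaved on alternating positions so that every transition advances by $1$ or $2$, and the skip paths are chained so that every block's start is reachable from the initial state by $\varepsilon$-moves. The differences are cosmetic: you add an extra spine state $c_i$ after each block, whereas the paper lets the skip path land directly on the next block's start, and you handle $\varepsilon\in L_f$ explicitly by making $c_0$ final, a case the paper's indexing does not treat. Your concern about $\varepsilon$-transitions is unnecessary, since the paper's proof uses them in exactly the same way.
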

\begin{proof}
    Let $\Sigma$ be some alphabet. 
    Let $L_f = \{w_1,w_2,...,w_m\} \subseteq \Sigma^*$ for some $m \ge 1$. 
    Let $k = \sum_{i=1}^{m}|w_i|$. 
    We construct a $2$-BW-NFA $A = (Q,\Sigma,\delta,q_0,F)$ s.t. $L(A) = L_f$. Set $Q = \{q_0,q_1,...,q_{2k-|w_m|+1}\}$ and define $\delta$ and $F$ by the following: For each $i\in[m]$, we distinguish between 2 cases. 

    First, if $m=1$, we do the following: We set $q_1\in\delta(q_0,w_1[1])$ as well as $q_2 \in \delta(q_0,\varepsilon)$. Then, for each $j\in[|w_1|-1]$, we add the transition $q_{2(j+1)-1}\in\delta(q_{2j-1},w_1[j+1])$ and the transition $q_{2(j+1)}\in\delta(q_{2j},\varepsilon)$. Finally, we define $q_{2|w_1|-1}\in F$. In total, we obtain two distinct outgoing paths from $q_0$, one reading $w_1$ and ending in a final state, the other reading $\varepsilon$ and ending in a non final state (see Fig.~\ref{fig:lemma:finite-existence-2BW-beginning}). 

    \begin{figure}[]
        \centering
    
            \resizebox{0.9\textwidth}{!}{
                \begin{tikzpicture}[>=Stealth, shorten >=1pt, auto, node distance=2.0cm]
                  \tikzset{
                    state/.style={circle, draw=black, thick, minimum size=8.8mm, inner sep=0pt, font=\small},
                    ystate/.style={state, fill=stateyellow},
                    tstate/.style={state, fill=stateteal},
                    vstate/.style={state, fill=stateviolet}
                  }
                
                  \node[vstate, initial, initial text=] (q0) {$q_0$};
                  \node[tstate, right=0.6cm of q0] (q1) {$q_1$};
                  \node[ystate, right=0.6cm of q1] (q2) {$q_2$};
                  \node[tstate, right=0.6cm of q2] (q3) {$q_3$};
                  \node[ystate, right=0.6cm of q3] (q4) {$q_4$};
                  \node[tstate, right=0.6cm of q4] (q5) {$q_5$};
                  \node[ystate, right=0.6cm of q5] (q6) {$q_6$};
                  \node[tstate, accepting, right=0.6cm of q6] (q7) {$q_7$};
                  \node[vstate, right=0.6 of q7] (q8) {$q_8$};
                
                  \path[->]
                    (q0) edge[bend left=45, edgeturquoise, very thick]
                      node[above] {$\ta$} (q1)
                    (q1) edge[bend left=45, edgeturquoise, very thick]
                      node[above] {$\tb$} (q3)
                    (q3) edge[bend left=45, edgeturquoise, very thick]
                      node[above] {$\tb$} (q5)
                    (q5) edge[bend left=45, edgeturquoise, very thick]
                      node[above] {$\ta$} (q7);
                
                  \draw[->, orange!80!yellow, very thick]
                    (q0) to[bend right=45] node[below] {$\varepsilon$} (q2);
                
                  \draw[->, orange!80!yellow, very thick]
                    (q2) to[bend right=45] node[below] {$\varepsilon$} (q4);
                
                  \draw[->, orange!80!yellow, very thick]
                    (q4) to[bend right=45] node[below] {$\varepsilon$} (q6);
                
                  \draw[->, orange!80!yellow, very thick]
                    (q6) to[bend right=45] node[below] {$\varepsilon$} (q8);
                \end{tikzpicture}
            }
        
        \caption{The initial part of the automaton that accepts the first word of $L_f$. The teal path correspondings to $w_1$ (here, for example $\mathtt{abba}$) while the orange path corresponds to the skipping $\varepsilon$-path that can be used to move to $q_8$.
        }
        \label{fig:lemma:finite-existence-2BW-beginning}
    \end{figure}
    
    Now, if $m \geq 2$, we inductively continue as follows. Let the largest state reached by the previous step be the state $q_{s}$, i.e., $s = \sum_{i'=0}^{i-1}2|w_{i'+1}|$. We reach $q_s$ only by the path that reads $\varepsilon$. Now we analogously construct two paths, one reading $w_i$ and the other reading $\varepsilon$. First, set $q_{s+1}\in\delta(q_{s},w_i[1])$ and $q_{s+2}\in\delta(q_{s},\varepsilon)$. Now, for each $j\in[|w_i|-1]$, we define $q_{s+2(j+1)-1} \in \delta(q_{s+2j-1},w_i[j+1])$ and $q_{s+2(j+1)} \in \delta(q_{s+2j},\varepsilon)$. Finally, set $q_{s+2|w_i|-1}\in F$. As before, we constructed two outgoing paths of the state $q_{s}$, one reading only $\varepsilon$ and ending in state $q_{s+2|w_i|}$, the other reading $w_i$ and ending in state $q_{s+2|w_i|-1}$ (see Fig.~\ref{fig:lemma:finite-existence-2BW-induction}). 

        \begin{figure}[]
        \centering
    
            \resizebox{0.75\textwidth}{!}{
                \begin{tikzpicture}[>=Stealth, shorten >=1pt, auto, node distance=1.9cm]
                  \tikzset{
                    state/.style={circle, draw=black, thick, minimum size=8.8mm, inner sep=0pt, font=\small},
                    ystate/.style={state, fill=stateyellow},
                    tstate/.style={state, fill=stateteal},
                    vstate/.style={state, fill=stateviolet}
                  }
                
                  \node[state] (q6) {$q_{s-2}$};
                  \node[state, accepting, right=0.6cm of q6] (q7) {$q_{s-1}$};
                
                  \node[vstate, right=0.6cm of q7] (qs) {$q_s$};
                  \node[tstate, right=0.6cm of qs] (qs1) {$q_{s+1}$};
                  \node[ystate, right=0.6cm of qs1] (qs2) {$q_{s+2}$};
                  \node[tstate, accepting, right=0.6cm of qs2] (qs3) {$q_{s+3}$};
                  \node[vstate, right=0.6cm of qs3] (qs4) {$q_{s+4}$};
                
                  \path[->]
                
                    (qs) edge[bend left=45, edgeturquoise, very thick]
                      node[above] {$c$} (qs1)
                    (qs1) edge[bend left=45, edgeturquoise, very thick]
                      node[above] {$d$} (qs3);
                
                  \draw[->, orange!80!yellow, very thick]
                    (q6) to[bend right=45] node[below] {$\varepsilon$} (qs);
                
                  \draw[->, orange!80!yellow, very thick]
                    (qs) to[bend right=45] node[below] {$\varepsilon$} (qs2);
                
                  \draw[->, orange!80!yellow, very thick]
                    (qs2) to[bend right=45] node[below] {$\varepsilon$} (qs4);
                
                \end{tikzpicture}
            }
        
        \caption{The inductive step of adding additional words to the automaton by the exemplary word $\mathtt{cd}$. Notice that $q_s$ could correspond to $q_8$ from the automaton in Fig.~\ref{fig:lemma:finite-existence-2BW-beginning}.}
        \label{fig:lemma:finite-existence-2BW-induction}
    \end{figure}

    For the last word $w_m$, we only have to construct one of the paths as the skipping $\varepsilon$ path is not necessary anymore, leading to $2k-|w_m|+1$ instead of $2k+1$ states (The $+1$ coming from the initial state $q_0$). Finally, we have to verify that $A$ actually has bandwidth $2$, i.e., each outgoing transition moves at most two states ahead. As all cases are constructed analogously, we only show this for the base step. First, there are transitions from $q_0$ to $q_1$ and from $q_0$ to $q_2$, which fulfill the constant. Next, for each $j\in[|w_1|-1]$, there is one transition from $q_{2j-1}$ to $q_{2(j+1)-1}$ and one transition from $q_{2j}$ to $q_{2(j+1)}$. For the first transition, we obtain a distance of $2(j+1)-1-(2j-1) = 2j+2-1-2j+1 = 2$. For the second transition, we obtain a distance of $2(j+1)-(2j) = 2j+2-2j=2$. Hence, for all transitions, $q_j\in\delta(q_i,a)$ implies $0 < (j-i) \bmod |Q| \leq 2 $. By the inductive construction, each and only the words $w_1$ to $w_m$ are accepted by $A$, resulting in $L_f = L(A)$. No transition skips over more than $2$ states and no state has more than $2$ outgoing transitions.
\qed
\end{proof}

\subsection{The Special Case of Bandwidth $1$}

It is rather apparent that $1$-BW-NFA do not have the same expressibility as $2$-BW-NFA. Not all finite languages can be encoded in a $1$-BW-NFA, even though $\varepsilon$-transitions are allowed. The main reasoning lies in the fact that we cannot skip ahead an arbitrary distance without implicitly allowing more combinations of letters to the read.

\begin{lemma}\label{lemma:finite-language-not-in-L1}
    We have $\mathcal{F} \not\subseteq \mathcal{L}_1$.
\end{lemma}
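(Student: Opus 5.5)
We take $L=\{\ta,\tb\tb\}$ as the witness (cf.\ Fig.~\ref{fig:relationship}) and show $L\notin\mathcal{L}_1$ by contradiction. Suppose $A$ is a $1$-BW-NFA with $L(A)=L$ and states $q_0,\dots,q_{n-1}$. By Definition~\ref{definition:k-bandwidth-NFA}, every transition, including $\varepsilon$-transitions, goes from some $q_p$ to $q_{(p+1)\bmod n}$. So the underlying graph is a single directed cycle, possibly with several parallel labelled edges between consecutive states, and a path starting in $q_0$ is determined by its length together with a choice of label on each edge it uses.

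\emph{Step 1: reduce to simple paths.} Take an accepting path for a word $xuy$ and suppose it contains a cycle reading $u$. If $u\neq\varepsilon$, pumping the cycle yields infinitely many accepted words $xu^iy$, contradicting finiteness of $L$. Hence every cycle inside an accepting path reads $\varepsilon$ and can be removed. Therefore each word of $L$ has an accepting path visiting $q_0,q_1,\dots,q_t$ in order with $t<n$.

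\emph{Step 2: set up the two paths.} Fix such a simple path $\pi_a$ for $\ta$, ending in a final state $q_f$. All its edges are $\varepsilon$ except the one leaving $q_i$, which reads $\ta$. Fix a simple path $\pi_b$ for $\tb\tb$, ending in a final state $q_g$, whose only non-$\varepsilon$ edges are the ones leaving $q_j$ and $q_{j'}$ (with $j<j'$), both reading $\tb$. Both paths run along the same prefix of the cycle. Consequently, for any cut point $p\le\min(f,g)$ we may follow one path on the edges before $q_p$ and the other path afterwards. The result is a valid path in $A$, ending at $q_f$ or $q_g$, and hence an accepted word. In particular, the prefix of $\pi_b$ up to $q_f$ is accepted when $f\le g$, and similarly with the roles swapped.

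\emph{Step 3: case analysis producing a word outside $L$.} Assume $f\le g$; the case $g<f$ is symmetric. Let $c\in\{0,1,2\}$ be the number of $\tb$-edges of $\pi_b$ lying before $q_f$. The $\pi_b$-prefix ending at $q_f$ is accepted and reads $\tb^c$, so $c\in\{0,1\}$ already gives the accepted word $\varepsilon$ or $\tb$, a contradiction. If $c=2$, then $j<j'<f$, and we splice at a suitable cut point $p$.
\begin{itemize}
\item If $i>j'$, take $\pi_b$ up to $q_{j'+1}$ and $\pi_a$ afterwards. This path ends at $q_f$ and reads $\tb\tb\ta$.
\item If $i\le j'$, take $\pi_a$ up to $q_{i+1}$ and $\pi_b$ afterwards, ending at $q_g$. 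This reads $\ta\tb$ or $\ta\tb\tb$ according to the position of $j$.
\item In the remaining interleavings, take $\pi_b$ up to $q_{i}$ and then $\pi_a$. This reads $\tb\ta$ or $\tb\tb\ta$.
\end{itemize}
In every case we obtain an accepted word outside $L$, which is the desired contradiction.

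The main obstacle is Step 3: the relative positions of $i$, $j$, $j'$, $f$, $g$ must be enumerated carefully so that every configuration is covered by a splice producing a forbidden word. I would organize it by the position of $i$ relative to $j$ and $j'$, with the general principle that swapping paths at the cut just after the $\ta$-edge, or just before it, always inserts or removes a $\tb$ next to the $\ta$.
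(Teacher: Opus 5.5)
Your approach is the paper's: in a $1$-BW-NFA the $\ta$-edge and the $\tb$-edges lie on one line of states, so a single run can pick up both letters and yield a word outside $\{\ta,\tb\tb\}$. Your version is more careful than the paper's two-line sketch, which assumes a unique $\ta$-transition and ignores the wrap-around from $q_{n-1}$ to $q_0$. Your Step 1 (pumping) and Step 2 (explicit splicing) handle both of these correctly. However, Step 3 as written has two holes, though both are easy to repair.

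\textbf{The case $i=j'$.} Your second bullet fails when $i=j'$, i.e.\ when the $\ta$-edge of $\pi_a$ and the second $\tb$-edge of $\pi_b$ both leave $q_i$. Then both $\tb$-edges of $\pi_b$ lie before $q_{i+1}$. So ``$\pi_a$ up to $q_{i+1}$, then $\pi_b$'' reads only $\ta$, which is in $L$. Your first two bullets ($i>j'$ and $i\le j'$) already cover every case, so the third bullet never applies as stated. Restrict the second bullet to $i<j'$ and use the third splice exactly when $i=j'$. Then $\pi_b$ up to $q_i$ reads $\tb$ (from $j<i$), and $\pi_a$ from $q_i$ reads $\ta$, giving the accepted word $\tb\ta$.

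\textbf{The case $g<f$ is not symmetric.} The analogue of your first step is the $\pi_a$-prefix ending at $q_g$. It reads $\varepsilon$ or $\ta$, and $\ta\in L$ gives no contradiction. What both cases actually give you is this: either a prefix of one path ending at the other path's final state reads $\varepsilon$ or $\tb$, or else $i,j,j'<\min(f,g)$.

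In the latter situation the three splices work uniformly, whichever of $f,g$ is smaller:
\begin{itemize}
\item for $i>j'$, cut at $j'+1$; this gives $\tb\tb\ta$, ending at $q_f$;
\item for $i<j'$, cut at $i+1$; this gives $\ta\tb$ or $\ta\tb\tb$, ending at $q_g$;
\item for $i=j'$, cut at $i$; this gives $\tb\ta$, ending at $q_f$.
\end{itemize}
All these cut points are at most $\min(f,g)$. With this reorganization the proof is complete.
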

\begin{proof}
    We will show $\{\ta,\tb\tb\} \not\in \mathcal{L}_1$.
    Assume that there exists a 1-BW-NFA that accepts the language $\{\ta,\tb\tb\}$.
    Due to its linear structure, such an automaton contains exactly one transition labeled with $\ta$ and two transitions labeled with $\tb$ along a single path.
    Therefore, there exist words $w_1$, $w_2$, and $w_3$ such that the automaton accepts either $w_1 \ta w_2 \tb w_3$ or $w_1 \tb w_2 \ta w_3$.
    A contradiction.
\qed
\end{proof}

Nevertheless, they are of high interest as in the context of co-transcriptional splicing, a bandwidth of $1$ in an encoded automaton could result in only very small factors having to be spliced out (see, e.g., \cite{ChoFSW25_NC}). As each $1$-BW-NFA can be understood as a line of states in which some states might be set to final states, we can give a full characterization of all non-empty finite languages that can be expressed by some $1$-BW-NFA.

\begin{lemma}\label{lemma:characterization-1-BW-NFA-finite}
    Let $L_f$ be some finite language. There exists a $1$-BW-NFA $A$ with $n\geq 1$ states and $L(A) = L_f$ if and only if there exist languages $S_1, ... , S_{n-1}\subseteq \Sigma\cup\{\varepsilon\}$, for some $n \geq 1$, and a set $I \subseteq [n]$ such that
    $$L_f = \bigcup_{i\in I} S_1\cdot ... \cdot S_{i-1}.$$
    In the case of $i=1$, let the empty product be defined as $\{\varepsilon\}$.
\end{lemma}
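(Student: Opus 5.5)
The proof goes through the linear structure forced by bandwidth $1$. With states indexed $q_0,\dots,q_{n-1}$ as in Definition~\ref{definition:k-bandwidth-NFA}, the condition $0<(j-i)\bmod n\le 1$ means every transition, including $\varepsilon$-transitions, goes from $q_{i}$ to $q_{i+1}$, except possibly a ``wrap-around'' transition from $q_{n-1}$ to $q_0$. For $i\in[n-1]$ I will let $S_i\subseteq\Sigma\cup\{\varepsilon\}$ be the set of labels $\ta$ with $q_i\in\delta(q_{i-1},\ta)$; this set may be empty. I will let $I=\{i\in[n] : q_{i-1}\in F\}$. The statement should then follow by reading off the languages of paths.

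\emph{($\Leftarrow$)} Given $S_1,\dots,S_{n-1}$ and $I$, I build the line automaton with states $q_0,\dots,q_{n-1}$ and $q_i\in\delta(q_{i-1},\ta)$ for every $\ta\in S_i$. There is no wrap-around transition, and the final states are $F=\{q_{i-1} : i\in I\}$. This automaton is $1$-bandwidth. Since the only path from $q_0$ to $q_{i-1}$ is the straight line $q_0\to q_1\to\dots\to q_{i-1}$, the set of words leading from $q_0$ to $q_{i-1}$ is exactly $S_1\cdots S_{i-1}$, with $\{\varepsilon\}$ for $i=1$. Taking the union over final states gives $L(A)=\bigcup_{i\in I}S_1\cdots S_{i-1}$.

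\emph{($\Rightarrow$)} Let $A$ be a $1$-BW-NFA with $L(A)=L_f$ finite, and define $S_i$ and $I$ as above. The only obstacle is the wrap-around transition $q_{n-1}\to q_0$, which is the step I expect to need care. I will split into two cases.

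\textbf{Case 1: some $S_j$ is empty.} Then no state $q_i$ with $i\ge j$ is reachable from $q_0$. Hence the wrap-around transition is never used on an accepting run from $q_0$, and deleting it does not change the language. The path argument from ($\Leftarrow$) then gives the formula. Unreachable final states contribute the empty product, since some factor $S_j$ in it is empty, so they do no harm.

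\textbf{Case 2: all $S_i$ are nonempty and the wrap-around transition exists.} Then the states form a cycle through every state, so every state is both reachable from $q_0$ and can reach $q_0$.
\begin{itemize}
\item If $F=\emptyset$, then $L_f=\emptyset$, and I take $I=\emptyset$ with arbitrary $S_i$.
\item If $F\neq\emptyset$ and some transition on the cycle carries a letter, then a run can go around the cycle arbitrarily often before stopping at a final state, so $L(A)$ is infinite. This contradicts finiteness.
\item Otherwise every transition on the cycle is an $\varepsilon$-transition and some state is final, so $L(A)=\{\varepsilon\}$. I take $S_1=\dots=S_{n-1}=\{\varepsilon\}$ and $I=\{1\}$. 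If $A$ has a single state, I take $n=1$ and $I=\{1\}$ when $q_0\in F$, and $I=\emptyset$ otherwise.
\end{itemize}
Both cases yield the claimed representation with the same $n$, which completes the plan.
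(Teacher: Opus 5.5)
Your proof is correct and follows the paper's route: you read off the label sets $S_i$ of the edges $q_{i-1}\to q_i$ of the line, note that the words reaching a state form exactly the corresponding product (with $\{\varepsilon\}$ for the start), and take the union over the final states. The one difference is that you explicitly handle the cyclic transition $q_{n-1}\to q_0$ permitted by the condition $0<(j-i)\bmod n\le 1$, using finiteness of $L_f$ to show it is either unreachable or forces $L_f\in\{\emptyset,\{\varepsilon\}\}$; the paper instead assumes without justification that every transition goes from $q_i$ to $q_{i+1}$, so your case analysis fills a step the paper leaves implicit (your omitted case of all $S_i$ nonempty and no wrap-around is just the line automaton itself).
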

\begin{proof}
    First, assume there exists a $1$-BW-NFA $A = (Q,\Sigma,\delta,q_1,F)$ with $n\geq 1$ states and $L(A) = L_f$. Assume w.l.o.g. that $Q = \{q_1,...,q_n\}$ such that for each $i\in[n-1]$ and each $a\in\Sigma\cup\{\varepsilon\}$ we have $\delta(q_i,a)\subseteq\{q_{i+1}\}$. For each $i\in[n-1]$, we define
    $ S_i = \{\ a\in\Sigma\cup\{\varepsilon\} \mid q_{i+1}\in\delta(q_i,a)\ \}.$
    Now let $t\in[n]$. A word $w\in\Sigma^*$ labels a path from $q_1$ to $q_t$ if and only if it can be written as
    $ w = w[1]w[2]\cdots w[t-1]$
    with $w[i]\in S_i$ for every $i\in[t-1]$ (recall that $w[i] = \varepsilon$ is allowed). Thus, the set of words on the path from $q_1$ to $q_t$ is exactly $S_1\cdot S_2 \cdot ... \cdot S_{t-1}$. By that,
    $$ L_f = L(A) = \bigcup_{q_t\in F} S_1\cdot ... \cdot S_{t-1}. $$
    So, setting $I=\{ i \mid q_i \in F \}$, the claim holds.

    For the other direction, assume there exist languages $S_1,...,S_{n-1}\subseteq \Sigma\cup\{\varepsilon\}$, for some $n\geq 1$, and a set $I\subseteq[n]$ such that
    $$ L_f = \bigcup_{i\in I} S_1\cdot ... \cdot S_{i-1}.$$
    We construct an NFA $A  = (Q,\Sigma,\delta,q_1,F)$ with states $Q = \{q_1,...,q_n\}$, initial state $q_1$, final states $F = \{\ q_i \mid i\in I\}$, and transitions
    $ q_{j+1} \in \delta(q_j,a) \text{ iff } a\in S_j,$
    for $j\in[n-1]$. This automaton clearly has bandwidth $1$. Following the construction, we observe the following, concluding this direction:
    $$L(A) = \bigcup_{i\in I} S_1\cdot ... \cdot S_{i-1} = L_f.$$ \qed
\end{proof}

So, by now we know that not all finite languages can be expressed by a $1$-BW-NFA but we know their general structure by this previous characterization. Given a set of words, however, it might not be immediately clear whether it can be structured with respect to the previous characterization. Hence, we consider the feasibility of Problem~\ref{prob:lin-fin-lang} for the case $k=1$. 

\begin{theorem}\label{theorem:existence-bw-nfa-k1}
Let \(L_f\) be a finite language over an alphabet \(\Sigma\), given explicitly
as a list of words, and let
\(
N=\max_{w\in L_f}|w|.
\)
We can decide in \(O(|L_f|^2N^4)\) time whether there exists a bandwidth-1 NFA
\(A\) such that
\(
L(A)=L_f.
\)
\end{theorem}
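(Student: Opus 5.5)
The plan is to reduce the question, via Lemma~\ref{lemma:characterization-1-BW-NFA-finite}, to a combinatorial question about a sequence of letter sets, and then to decide that question by dynamic programming over positions.

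\textbf{Step 1: normalize.} Suppose a $1$-BW-NFA with transition sets $S_1,\dots,S_{n-1}$ and final index set $I$ accepts $L_f$. Any $S_j=\{\varepsilon\}$ can be contracted by merging $q_j$ and $q_{j+1}$, with the merged state final if either was. States after the last final state are useless and can be deleted. After this, every $S_j$ contains at least one letter, and the last state is final. Hence the longest accepted word has length exactly $n-1$, so $n-1=N$.

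\textbf{Step 2: pin down the letter parts.} Every word of length $N$ must use a letter at every one of the $N$ transitions and end in the final state $q_{N+1}$. Conversely, every such choice of letters is accepted. Therefore
\[
L_f\cap\Sigma^N = (S_1\setminus\{\varepsilon\})\cdots(S_N\setminus\{\varepsilon\}).
\]
This forces $S_j\setminus\{\varepsilon\}$ to be the set of $j$-th letters of the words of length $N$ in $L_f$. The equality can be checked by comparing $|L_f\cap\Sigma^N|$ with $\prod_j|S_j\setminus\{\varepsilon\}|$; if it fails, we answer no. What remains unknown is the set $E\subseteq[N]$ of positions whose set also contains $\varepsilon$, together with $I$.

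\textbf{Step 3: reformulate with $E$ and $I$.} For fixed $(E,I)$, a word $w$ is accepted if and only if there are $t\in I$ and a monotone embedding of the letters of $w$ into positions $1,\dots,t-1$ with the following property: each skipped position lies in $E$, and each used position $j$ satisfies $w[\cdot]\in S_j$. Two requirements must hold.
\begin{itemize}
\item Completeness: every $w\in L_f$ has such an embedding.
\item Soundness: no word outside $L_f$ is produced.
\end{itemize}
Soundness only gets harder as $E$ and $I$ grow, and completeness only gets easier. The difficulty is that soundness is not preserved under taking unions of admissible choices, so a naive greedy choice of a maximal $E$ is not justified.

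\textbf{Step 4: local certificates and the dynamic program.} I would try to show that soundness can be certified locally, and the first thing I would attempt is the following claim. A violation always arises by combining two "witness" words $u,v\in L_f$. The new word agrees with the embedding of $u$ up to some position and then with the embedding of $v$ from some $\varepsilon$-position on, exploiting an $\varepsilon$ that $u$ needs and a letter or final state that $v$ needs. If this claim holds, correctness reduces to the absence of bad splices over pairs of words.

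This motivates a dynamic program that scans positions $j=1,\dots,N$. Its states are pairs (word $u$, current offset in $u$) for pairs of words, giving $O(|L_f|^2N^2)$ states. Each position is decided as $\varepsilon$-optional or not, and final or not, consistently with all tracked embeddings. Each state has $O(N^2)$ transitions for the choice of the next matched positions, which gives the stated $O(|L_f|^2N^4)$ bound.

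\textbf{Main obstacle.} The step I expect to be hardest is proving the pairwise-splice characterization of soundness: any word generated outside $L_f$ can be detected by looking at only two words of $L_f$ and their embeddings. Once that is established, the algorithm and its running time follow by routine bookkeeping.
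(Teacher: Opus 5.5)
Your Steps~1--2 coincide with the paper's starting point: the letter backbone $S_1\cdots S_N$ is forced by the words of length $N$. The gap is Step~4, which is not yet an algorithm, and it carries the whole content of the theorem.

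The unknown is one global set $E$ of optional positions, shared by all words. Completeness asks that every $w\in L_f$ have \emph{some} embedding under this common $E$. A left-to-right scan must therefore keep, for every word simultaneously, the set of offsets it could have reached. A state recording only pairs (word, offset) loses exactly this coupling, and recording it for all words at once is exponential in $|L_f|$. Soundness is two-sided as well: whether a partial run extends to a run of a word of $L_f$ depends on positions not yet scanned. You do not say what a DP state certifies, how a transition at position $j$ is constrained, or why optimal substructure holds. The bound $O(|L_f|^2N^4)$ is matched to the target rather than derived.

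In fact, the pairwise-splice claim you single out as the main obstacle is the easy part. Suppose $\pi$ is a bad run. Let $s$ be maximal such that $\pi$ agrees on positions $1,\dots,s$ with a run of some word of $L_f$. Splice that prefix with a run of a word of $L_f$ that makes the same next move as $\pi$: it reads the same letter, takes the same $\varepsilon$, or stops at the same final state. For a letter, a full-length word supplies such a run. For an $\varepsilon$ or a final state, one exists as long as every optional position and final state is used by some run of a word of $L_f$, which holds in every valid configuration. By maximality of $s$ the splice is bad. So a characterization of soundness is not what is missing; a polynomial search over $E$ is.

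The paper avoids a global DP through three reductions, and you would need some substitute for them.
\begin{enumerate}
\item $I$ is not a search variable. Since $L(A)=\bigcup_{t\in I}P_t$, for a fixed $\varepsilon$-assignment one may take $I$ to be all safe states $\{t \mid P_t\subseteq L_f\}$.
\item The non-singleton positions $J=\{i \mid |S_i|\ge 2\}$ satisfy $2^{|J|}\le|L_f\cap\Sigma^N|\le|L_f|$. Their $\varepsilon$-status can therefore be enumerated in at most $|L_f|$ ways.
\item Within a maximal run of singleton positions carrying the same letter, only the number of optional positions matters, since they can be pushed to the right end. The remaining unknowns are thus counts $d_r$. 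These are increased one at a time by a forcing argument that processes the longest uncovered words first.
\end{enumerate}
This gives at most $N$ increments, each testing at most $N$ blocks at cost $O(|L_f|N^2)$, over $|L_f|$ guesses, for a total of $O(|L_f|^2N^4)$. Without an analogous decomposition of the search over $E$, your proposal establishes only the necessary condition of Step~2.
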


\begin{proof}
Let
\(
L_{\max}=\{w\in L_f \mid |w|=N\}.
\)
If \(L_f=\emptyset\), the problem is trivial. Hence assume \(L_f\neq\emptyset\).
As in Lemma~\ref{lemma:characterization-1-BW-NFA-finite}, a bandwidth-1 NFA accepting a finite language may be viewed
as a path
\[
q_0\to q_1\to\cdots\to q_N,
\]
where the transition from \(q_{i-1}\) to \(q_i\) has a label set
\(
S_i'\subseteq \Sigma\cup\{\varepsilon\}.
\)
Every word of length \(N\) must use a non-\(\varepsilon\) label on every one of
the \(N\) transitions. Hence the non-\(\varepsilon\) backbone is uniquely forced
by \(L_{\max}\). For \(i\in[N]\), define
\[
S_i=\{w[i] \mid w\in L_{\max}\}.
\]
A necessary condition is
\(
L_{\max}=S_1S_2\cdots S_N.
\)
If this condition fails, we reject, so we assume from now on that
\(
L_{\max}=S_1S_2\cdots S_N.
\)

\medskip
\noindent
\textbf{Phase 1: guessing nonsingleton \(\varepsilon\)-assignments.}
Let
\(
J=\{i\in[N] \mid |S_i|\ge 2\}.
\)
Since
\[
|L_{\max}|=\prod_{i=1}^N |S_i|,
\]
we have
\(
2^{|J|}\le |L_{\max}|\le |L_f|.
\)
Therefore
\(
|J|\le \log_2 |L_f|.
\)
Thus all possible choices of which transitions \(i\in J\) also receive an
\(\varepsilon\)-label can be enumerated in polynomial time in \(|L_f|\).

Fix one such choice \(E_J\subseteq J\). We now decide whether this choice can
be completed by assigning \(\varepsilon\)-labels to singleton transitions.

\medskip
\noindent
\textbf{Phase 2: right-end normal form for singleton blocks.}
We group the singleton transitions into maximal consecutive blocks with the
same singleton label. A singleton block has the form
\(
a^m
\)
for some \(a\in\Sigma\), and two consecutive singleton blocks have different
letters. Non-singleton transitions separate singleton blocks.

We use the following normal form. In each maximal singleton block, if \(d\)
transitions are assigned \(\varepsilon\), then these \(d\) transitions are placed at
the right end of the block.
Consider one maximal singleton block
\(
a^m.
\)
Suppose \(d\) of its transitions are assigned \(\varepsilon\). For any final
state after the end of the block, the contribution of this block depends only on
\(d\), not on the exact positions of the \(d\) optional transitions: it is
\[
\{a^{m-d},a^{m-d+1},\ldots,a^m\}.
\]

Now consider a final state inside the block, after \(h\) positions of this block.
If \(e\) of the first \(h\) positions are optional, then the block
contribution is
\[
\{a^{h-e},a^{h-e+1},\ldots,a^h\}.
\]
After moving all \(d\) optional positions to the right end of the block, this
same interval of exponents can be obtained by setting the corresponding
states in the block final. By the way we defined the language accepted with safe final states, every state
whose prefix language is contained in \(L_f\) will be final, and no
unsafe state will be, in other words, moving the optional transitions to the right end of the block cannot
introduce words outside \(L_f\), in other words, every word previously accepted inside the
block is still accepted at one of the corresponding safe endpoints. 
Hence it is enough to store, for each singleton
block \(B_r\), a number
\(
d_r
\)
of rightmost \(\varepsilon\)-transitions.

Let
\(
D=(d_1,\ldots,d_t)
\)
denote the current vector of singleton \(\varepsilon\)-counts, together with
the fixed non-singleton choice \(E_J\), where $t$ is the number of transition blocks consisting of identical labels. For \(0\le i\le N\), let
\(
P_i(D)
\)
be the language accepted by setting \(q_i\) final under the current $\varepsilon$-assignment $D$.

A state \(q_i\) is safe if
\(
P_i(D)\subseteq L_f.
\)
Define
\(
F_{\mathrm{safe}}(D)=\{q_i \mid P_i(D)\subseteq L_f\},
\)
and
\[
L(D)=\bigcup_{q_i\in F_{\mathrm{safe}}(D)}P_i(D).
\]
Thus \(L(D)\) is the largest language accepted by the current transition
assignment without accepting any word outside \(L_f\).

\medskip
\noindent
\textbf{Phase 3: forced inference on singleton blocks.}
Initialize \(D=(0,\ldots,0)\). 

At every step compute
\(
U=L_f\setminus L(D).
\)
If \(U=\emptyset\), the current assignment realizes \(L_f\), so we accept this
guess \(E_J\) and the current $D$.

Otherwise let
\(
\ell=\max\{|w| \mid w\in U\},
\)
and let
\(
U_\ell=\{w\in U \mid |w|=\ell\}.
\)

For a singleton block \(B_r\), let \(D+e_r\) denote the assignment obtained
from \(D\) by adding one more rightmost \(\varepsilon\)-transition to \(B_r\).
We say that \(r\) is a candidate for \(w\in U_\ell\) if
\[
w\in L(D+e_r)
\]
and the final state \(q_N\) remains safe, that is,
\[
P_N(D+e_r)\subseteq L_f.
\]
Let
\(
C_w=\{r \mid r\text{ is a candidate for }w\}.
\)
If \(C_w=\emptyset\) for some \(w\in U_\ell\), then this guess \(E_J\) cannot be
completed, and we reject this guess.

Otherwise, by the forcing claim below, if this guess admits a valid completion,
then there exists some
\(
w^\star\in U_\ell
\)
with
\(
|C_{w^\star}|=1.
\)
Let
\(
C_{w^\star}=\{r\}.
\)
We increment \(d_r\) by one and continue.

\medskip
\noindent
\textbf{Forcing claim for singleton blocks.}
We prove the claim used above. Suppose that the current assignment \(D\) has a valid completion
\(D^\star\ge D\), still under the fixed nonsingleton choice \(E_J\), such that
\[
L(D^\star)=L_f.
\]
We prove that there exists a word \(w^\star\in U_\ell\) with \(|C_{w^\star}|=1\).

Since \(U\neq\emptyset\) and \(D^\star\) realizes \(L_f\), some word of
\(U_\ell\) is accepted at a safe final state under \(D^\star\). Consider an
accepting computation of such a word, chosen so that the number of singleton
transitions that are skipped using an $\varepsilon$-transition under \(D^\star\) but not yet skippable under \(D\)
is minimal. At least one such new $\varepsilon$-transition (singleton skip) is used, otherwise the word
would already belong to \(L(D)\).
In particular, some singleton block contributes one new skip to the current highest
uncovered layer.

We now show that one may choose such a contribution with a private witness.
Fix one singleton block \(B_i\) whose next rightmost transition is skipped in
the completion. Freeze the choices of all other transitions as follows. For
singleton blocks, use the choices of the computation described above. For the
nonsingleton transitions, choose non-\(\varepsilon\) letters, subject only to the
following separator convention: whenever two singleton runs with the same letter $a$
would become adjacent after deleting a non-singleton block that has some letter $b\neq a$ (such a letter exists, since the block is not a singleton one), keep $b^k$ from the non-singleton block's accepted words, where $k$ is the length of the block. If the singleton letters on the two sides are
already different, no separator is needed.

With these choices fixed, the relevant part of the accepting path becomes an
ordinary word. By the separator convention, distinct singleton runs do not merge.
Thus its singleton-run skeleton has the form
\[
a_1^{m_1}a_2^{m_2}\cdots a_s^{m_s},
\qquad
m_j>0,\quad a_j\neq a_{j+1}.
\]
Let the chosen block \(B_i\) correspond to the run \(a_i^{m_i}\). Form the word
\(
w_i
\)
by decreasing this run by one and leaving all other fixed choices unchanged.

Then \(w_i\in L(D+e_i)\). Moreover,
\[
P_N(D+e_i)\subseteq P_N(D^\star)\subseteq L_f,
\]
so \(i\) is a candidate for \(w_i\).

The word \(w_i\) cannot be obtained by adding one more \(\varepsilon\)-transition
to any other singleton block. Indeed, if a different block \(B_j\) with
\(j<i\) produced \(w_i\), then the run \(a_i^{m_i}\) would have to start one
position too early and consume the last letter before it. This is impossible
because, by construction, that preceding letter is different from \(a_i\). The
case \(j>i\) is symmetric: then the run \(a_i^{m_i}\) would have to end one
position too late and consume the first letter after it, again impossible because
that letter is different from \(a_i\). The boundary cases are one-sided versions
of the same argument.

Hence no other singleton block is a candidate for \(w_i\). Also \(w_i\notin
L(D)\), since producing it requires exactly the new skip in \(B_i\), and no other
block can simulate that skip. Since the final state used by the completion is
safe, \(w_i\in L_f\). Therefore
\(
w_i\in U.
\)
By construction it belongs to the current highest uncovered layer, so
\(
w_i\in U_\ell.
\)
Thus
\(
C_{w_i}=\{i\}.
\)
This proves that whenever a valid completion exists, the algorithm finds a word
with a unique candidate singleton block. Therefore rejecting a branch only when
no such word exists is sound.$\blacktriangleleft$

\medskip
\noindent
\textbf{Verification and complexity.}
If the algorithm accepts, it has explicitly constructed a bandwidth-1 NFA
whose safe final states accept exactly \(L_f\). Conversely, suppose a valid
bandwidth-1 NFA exists. Choose the branch \(E_J\) corresponding to its
\(\varepsilon\)-choices on non-singleton transitions, and put its singleton
part into right-end normal form. By the forcing claim, at every step of Phase 3
the algorithm chooses a singleton block that is forced in every valid
completion of the current assignment. Therefore this branch never rejects and
eventually reaches an assignment \(D\) with
\(
L(D)=L_f.
\)
Thus the algorithm accepts exactly the bandwidth-1 finite languages.

Phase 1 enumerates at most \(|L_f|\) choices for the
\(\varepsilon\)-assignments on non-singleton transitions.

Fix one such choice. The singleton-inference loop performs at most \(N\)
iterations, since each iteration adds one new \(\varepsilon\)-transition to a
singleton block. In one iteration, there are at most \(N\) singleton blocks to
test. For each candidate increment \(D+e_r\), we recompute the prefix languages
\(P_i(D+e_r)\), the safe states, and the language \(L(D+e_r)\). Using the prefix tree
of \(L_f\), every product enumeration is stopped as soon as a word leaves the
tree. If the product remains safe, it contains at most \(|L_f|\) words. Therefore
one such recomputation costs
\(
O(|L_f|N^2).
\)
Thus one iteration costs
\(
O(N\cdot |L_f|N^2)=O(|L_f|N^3),
\)
and one fixed non-singleton guess costs
\(
O(N\cdot |L_f|N^3)=O(|L_f|N^4).
\)
Multiplying by the at most \(|L_f|\) non-singleton guesses gives the total running
time
\[
O(|L_f|^2N^4).
\]\qed
\end{proof}

This settles this part of the problem on a positive note with respect to practical feasibility, but since the running time depends polynomially on the number of words in the language, the algorithm does not work in polynomial time w.r.t. to an arbirary DFA (let alone NFA) input. This leaves open the question whether an efficient decision procedure exists when the input language is presented as a finite automaton rather than a list of words (as the latter can be exponentially larger than the former for the same language). For $k \geq 2$, we know that each finite language can be encoded and for $k=1$ we now know that we can decide in polynomial time whether a finite language given as a list of words can be encoded in a $k$-BW-NFA. For $k=1$, the size of the resulting automata is determined by the longest words in the given language~$F$. For $k \geq 2$, the size of the resulting automaton is not apparent. Hence, asking whether such an automaton with $n$ states can be obtained, remains of high practical interest.

\subsection{Minimization of $k$-BW-NFAs is Hard for fixed $k\geq 2$}

In this last part, we consider the computational complexity of the minimization problem for $k$-bandwidth NFAs. In \cite{ChoFSW25_NC}, it was shown for an analogous model that if $k$ is left as part of the input, the problem is generally NP-hard. As it turns out, even though its expressiveness is severely restricted if the parameter $k$ is chosen to be fixed as a small number, the problem of deciding the existence of an $n$ state $k$-bandwith NFA is also already NP-hard in the case of fixing $k=2$ and setting the required language to be a finite set of words.

\begin{theorem}\label{theorem:minization-hard-fixed-k}
    Let $L_f$ be a finite language and $n\in\mathbb{N}$. Deciding the existence of a $2$-BW-NFA $A$ with $n$ states such that $L(A) = L_f$ is NP-complete.
\end{theorem}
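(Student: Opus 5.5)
Membership in NP and NP-hardness are handled separately.

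\textbf{Membership in NP.} By Lemma~\ref{lemma:finite-existence-2BW}, a $2$-BW-NFA for $L_f$ always exists with $O(\sum_{w\in L_f}|w|)$ states. We may therefore assume $n$ is at most this bound; larger $n$ can be answered positively by padding with unreachable or dead states, which keep bandwidth $2$ when placed on the cycle. The certificate is then an NFA with $n$ states, listed in the order witnessing bandwidth $2$, together with its transitions. There are at most $2n|\Sigma\cup\{\varepsilon\}|$ transitions, since each state has only two admissible targets. The bandwidth condition is checked directly from the ordering. Equality $L(A)=L_f$ is checked in two steps. First, each word of $L_f$ is simulated with $\varepsilon$-closures. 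Second, we verify that $A$ accepts no word outside $L_f$. For this, build the trie DFA of $L_f$ and run a product reachability search between $A$ and this DFA, extended by a sink state. The check fails exactly when some pair (final state of $A$, sink or non-accepting trie node) is reachable. This is polynomial.

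\textbf{NP-hardness.} We reduce from a known NP-hard graph problem whose structure matches the ``two interleaved tracks'' phenomenon of bandwidth $2$. The natural candidate is \textsc{Bandwidth} itself for fixed small $k$. However, \textsc{Bandwidth} is polynomial for fixed $k$, so that cannot work directly. Instead I would reduce from the problem used in~\cite{ChoFSW25_NC} for unbounded $k$, namely minimal NFA construction for finite languages. A standard alternative is \textsc{Vertex Cover} or \textsc{3-Partition}/\textsc{Exact Cover}, encoded as words whose optimal sharing of states corresponds to a cover. Concretely, I would try \textsc{Vertex Cover} on a graph $G=(V,E)$ as follows. For each edge $\{u,v\}$, put a word $x_u\,\#_{uv}\,x_v$ (or the like) into $L_f$, using fresh separator letters. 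The shared prefixes and suffixes $x_u$ can then be reused by a path segment exactly when the automaton dedicates a ``hub'' segment to $u$. Choose $n$ so that an automaton with $n$ states exists iff at most $c$ hub segments suffice. The bandwidth-$2$ layout allows a read track and an $\varepsilon$-skip track, as in Lemma~\ref{lemma:finite-existence-2BW}. This gives a canonical ``cost'' per word that is lower for words routed through a shared segment. The forward direction is a concrete construction generalizing the proof of Lemma~\ref{lemma:finite-existence-2BW}: segments for the chosen cover vertices, connected by skip tracks.

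The main obstacle is the backward direction. Given an arbitrary $2$-BW-NFA with $n$ states accepting $L_f$, we must extract a small vertex cover. Two tools are needed. One is a state-counting lower bound: every word $w$ needs a path of $|w|$ letter transitions. The other is an argument that distinct fresh letters $\#_{uv}$ prevent unintended sharing, since sharing would accept cross words not in $L_f$, in the spirit of the proof of Lemma~\ref{lemma:finite-language-not-in-L1}. The cyclic wrap-around in Definition~\ref{definition:k-bandwidth-NFA} also has to be controlled, since finiteness forbids cycles on accepted paths. Here I would first try to argue that any state on a cycle is useless and can be deleted. If the vertex-cover encoding proves too loose, my fallback is to adapt the reduction of~\cite{ChoFSW25_NC} directly. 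That reduction's gadget would be padded so that all its transitions become local, using the skip-track idea to simulate long jumps with extra states whose number is fixed and accounted for in $n$.
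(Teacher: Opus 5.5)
\textbf{Membership is fine; the hardness half has a genuine gap.} Your NP-membership argument is correct. It is in fact more careful than the paper's, which checks only $L(A)\subseteq L_f$ and tacitly assumes $n$ is polynomially bounded.

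\textbf{The gap: NP-hardness is a plan, not a reduction.} The instance is never fixed: the words are ``$x_u\#_{uv}x_v$ (or the like)'', and both $x_u$ and $n$ are left open. The forward construction is only announced. The backward direction, which you yourself call the main obstacle, is not attempted.

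It is also unclear that this encoding captures vertex cover under bandwidth $2$. Since $\#_{uv}$ occurs in only one word, any $\#_{uv}$-transition must go from a state with $\mathtt{In}=\{x_u\}$ to a state with $\mathtt{Out}=\{x_v\}$. Each state has at most two successor states, so a shared ``hub'' for $u$ directly serves at most two edges. Serving more requires further states, either an $\varepsilon$-chain or a duplicated $x_u$-path, and their cost competes with the savings from sharing. Whether the optimum then tracks a minimum vertex cover is exactly the unproven point.

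Your fallback has a concrete defect too. The number of extra states needed to localize long jumps depends on where those jumps sit in an unknown witness automaton, so it cannot be fixed in $n$ in advance. The converse direction would still need a new structural argument, because an optimal $2$-BW-NFA need not be a padded version of anything. Separately, states on cycles are not automatically useless: an $\varepsilon$-cycle through the initial state is compatible with a finite language.

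\textbf{The missing idea, as the paper does it.} The paper reduces from \texttt{Biclique-Cover-Number} using the simplest possible language: $\Sigma=V_1\cup V_2$, $L_f=\{ab\mid \{a,b\}\in E,\ a\in V_1,\ b\in V_2\}$, and $n=4(c-1)$.
\begin{itemize}
\item Because every word has length two, consider a state $q$ on an accepting path that is entered after one $V_1$-letter and left with one $V_2$-letter into a final state. Such a $q$ satisfies $\mathtt{In}(q)\cdot\mathtt{Out}(q)\subseteq L_f$, so it spans a biclique of $G$. These ``middle'' states cover $E$; this is the classical link between NFA states and biclique covers.
\item Bandwidth $2$ enters only through the state count. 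Each biclique costs a $4$-state gadget (entry, middle, $\varepsilon$-skip, final), with gadgets chained by $\varepsilon$-transitions. The last two bicliques share one gadget, for $4(c-1)$ states in total.
\item For the converse, a case analysis on pairs of maximal middle states at distance less than $4$ shows that $4(c-1)$ states cannot host more than $c$ of them.
\end{itemize}
Your proposal still needs a concrete encoding of this kind, together with such a counting argument for the converse.
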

\begin{proof}
    Regarding NP-containment, indeed,
    we can define a polynomial time verifier for a certificate containing a $k$-BW-NFA with $n$ states. The maximum size of $A$ lies in $O(n\cdot k \cdot |\Sigma|)$ for the total number of states $n$, the permissible bandwidth $k$ that implies the number of states a transition can be added to, and the size of the alphabet $|\Sigma|$ that determines the maximum number of transitions between any two given states. To verify this solution, we can construct the minimal DFA representation of $L_f$ in polynomial time, construct its complement and then its intersection with $A$ in polynomial time. If the resulting automaton has a non-empty language, the verifier returns false, otherwise true.

    NP-hardness is shown by a reduction from the $\texttt{Biclique-Cover-Number}$ problem. We recall its definition. Let $G = (V,E)$ be a bipartite graph over vertices $V = V_1\cup V_2$ with w.l.o.g. $V_1 = \{a_1,a_2,...,a_\ell\}$ and $V_2 = \{b_1,b_2,...,b_r\}$, for some numbers $\ell,r\in\N$, such that for all edges $\{u,v\}\in E$ we have $u\in V_1$ and $v\in V_2$. A biclique-edge-cover $C = \{C_1,C_2,...\}$ of $G$ is a set of bicliques (i.e., a set of complete bipartite subgraphs of $G$) that cover all edges in $G$. For each $C_i\in C$, we assume that it is a subset of $V$, since these elements uniquely define all edges in a complete subgraph. Deciding whether there exists a biclique-edge-cover with at most $c$ bicliques, for some number $c > 1$, is known to be generally NP-hard~\cite{ORLIN1977406}. 

    Now, let $G$ be a bipartite graph given as above and let $c\in\N$ with $c > 1$ be some positive number. We proceed by constructing an instance of the above mentioned minimization problem. We set $\Sigma = V$, $L_f = \{\ a_ib_j\mid \{a_i,b_j\}\in E\ \}$, and $n = 4(c-1)$.

    ($\Rightarrow$): Assume that there exists a biclique-edge-cover $C = \{C_1,\dots,C_c\}$ of $G$ with $c>1$. We construct a $2$-BW-NFA $A$ with $n = 4(c-1)$ states.
    Here, we only sketch the idea (see also Figure~\ref{fig:proof:theorem:minization-hard-fixed-k-direction1} for the case $|C|=3$). For each biclique $C_i$, we use a gadget with four states $p_0 < p_1 < p_2 < p_3$. For every $\ta\in C_i \cap V_1$, we add a transition from $p_0$ to $p_1$ reading $\ta$, and for every $\tb\in C_i\cap V_2$, we add a transition from $p_1$ to $p_3$ reading $\tb$, where $p_3$ is a final state. Hence, this gadget accepts exactly the words $\ta\tb$ corresponding to edges covered by $C_i$. Moreover, we add an $\varepsilon$-transition from $p_0$ to $p_2$ and use a $\varepsilon$-transitions from $p_2$ to chain gadgets, so that the automaton may either continue to a later gadget or accept a word in the current one. To save four states overall, the last two bicliques $C_{c-1}$ and $C_c$ are represented by one combined gadget. It uses a common initial state $p_0$ and a common final state $p_3$ but two distinct middle branches for $C_{c-1}$ and $C_c$. In this way, the whole construction uses exactly $4(c-1)$ states and accepts precisely the language $L_f$. 
    
    \begin{figure}[h!]
    \centering

    \resizebox{1\textwidth}{!}{
        \begin{tikzpicture}[>=Stealth, shorten >=1pt, auto, node distance=1.95cm]
          \tikzset{
            state/.style={circle, draw=black, thick, minimum size=8.5mm, inner sep=0pt, font=\small},
            tealstate/.style={state, fill=stateteal},
            ystate/.style={state, fill=stateyellow},
            every loop/.style={min distance=10mm}
          }
        
          \node[tealstate, initial, initial text=] (q0) {$q_0$};
          \node[tealstate, right=0.8of q0] (q1) {$q_1$};
          \node[tealstate, right=0.8cm of q1] (q2) {$q_2$};
          \node[tealstate, accepting, right=0.8cm of q2] (q3) {$q_3$};
        
          \node[ystate, right=0.8cm of q3] (q4) {$q_4$};
          \node[ystate, right=0.8cm of q4] (q5) {$q_5$};
          \node[ystate, right=0.8cm of q5] (q6) {$q_6$};
          \node[ystate, accepting, right=0.8cm of q6] (q7) {$q_7$};
        
          \path[->]
            (q0) edge[bend left=35, red, very thick]
              node[above] {$C_1\cap V_1$} (q1)
            (q1) edge[bend left=35, blue, very thick]
              node[above] {$C_1\cap V_2$} (q3)
            (q0) edge[bend right=35, edgepurple, very thick]
              node[below] {$\varepsilon$} (q2)
            (q2) edge[bend right=35, edgepurple, very thick]
              node[below] {$\varepsilon$} (q4)
        
            (q4) edge[bend left=35, red, very thick]
              node[above] {$C_2\cap V_1$} (q5)
            (q4) edge[bend right=35, red, very thick]
              node[below] {$C_3\cap V_1$} (q6)
        
            (q5) edge[bend left=35, blue, very thick]
              node[above] {$C_2\cap V_2$} (q7)
            (q6) edge[bend right=35, blue, very thick]
              node[below] {$C_3\cap V_2$} (q7);
        \end{tikzpicture}
    }
    
    \caption{Construction of a $2$-BW-NFA $A$ for a biclique-edge-covering $C$ containing three bicliques. The light blue and orange colored states refer to two distinct gadgets.
    }
    \label{fig:proof:theorem:minization-hard-fixed-k-direction1}
\end{figure}

    ($\Leftarrow$): Now assume there exists a $2$-BW-NFA $A$ with at most $n = 4(c-1)$ states such that $L(A) = L_f$. It can be shown that the small bandwidth $2$ and maximum state space $n$ results in a set of so called maximum middle states $M$ that each can be used to represent one biclique. Each state $q\in M$ represents a state that sits in the middle between a set of vertices from $V_1$ and a set of vertices from $V_2$ that can form a biclique (in the example in Figure~\ref{fig:proof:theorem:minization-hard-fixed-k-direction1}, we would have $M = \{q_1,q_5,q_6\}$). It can be shown that each state $q\in M$ can be used to construct a biclique in $G$ by bringing together all letters that lead to $q$ with all the letters that go out from $q$ to a final state. By showing that $A$ with $n = 4(c-1)$ states can only have at most $|M| \leq c$ many maximum middle states, we obtain that such a $2$-BW-NFA $A$ emits a biclique-edge-covering $C$ of $G$ with at most $|C| = c$ bicliques. This direction of the proof, despite its simple intuition, requires a very comprehensive and detailed case distinction to actually confirm that $|M| > c$ is not possible. For verification purposes, the full detailed proof with the specific definition of $M$ can be found in the appendix of this paper\qed
\end{proof}

\section{Final Discussion}

We introduced the notion of $k$-bandwidth NFA and analyzed several of its properties. 
We proved a strict infinite hierarchy of language classes defined by bandwidth (see Fig.~\ref{fig:relationship}).

We showed that finite languages can be represented with maximum bandwidth $k = 2$. For $k = 1$, we obtained polynomial time decidability whether a $1$-BW-NFA $A$ with $L(A) = L_f$ exists for a given finite language $L_f$ presented as a list of words. The decision procedure implied a minimal size of such an automaton, allowing for a clear answer regarding the minimization problem. However, restricting $k$ to the next smallest constant $k=2$, we observe that minimization becomes NP-hard, establishing a clear boundary regarding feasibility.

This sets a solid theoretical foundation for the relevant questions of optimizing NFA that are to be encoded on circular DNA for RNA transcription. Some questions remain. First, the hardness result for the minimization problem for $k=2$ is based on the assumption that the alphabet is unbounded. It would be interesting to know whether this result can be extended to finite alphabets.

Next, most results were established for finite languages, as those are most relevant for our practical motivation. The hardness results clearly translate to unbounded regular languages. Nonetheless, it would be interesting to establish a characterization of all classes in the hierarchy of $k$-BW-NFA as well as to establish whether the decidability results persist in the case of non-finite regular languages.

Finally, this paper was concerned with exact matching of languages. In practice, covering the target language and allowing for certain unwanted words (given some restrictions) might still be a valid approach. Hence, considering $k$-BW-NFA in an approximate setting is also a worthwhile direction to consider in the future.

\subsection*{Acknowledgements.}

This work was supported in part by Institute of Information \& Communications Technology Planning \& Evaluation (IITP) under the Artificial Intelligence Convergence Innovation Human Resources Development (IITP-2025-RS-2023-00255968) grant funded by the Korea government (MSIT) to D.-J. C., 
JSPS KAKENHI Grant Number JP23K10976 to S.Z.F., 
Daiichi-Sankyo “Habataku” Support Program for the Next Generation of Researchers to D.~I., 
``Ambition Internationale'' de la Region d'Auvergne-Rh\^one-Alpes ``Rokam: RNA Origami Kinetic Assembly Model'' to S.~S. 
In addition, M.~W. gratefully acknowledges Dirk Nowotka (Kiel University) for his support and encouragement in participating in this collaboration. 

\bibliographystyle{splncs04}
\bibliography{main_arxiv}

\newpage

\appendix

\section{Full Proof of Theorem \ref{theorem:minization-hard-fixed-k}}\label{proof:theorem:minization-hard-fixed-k}

\subsection{NP-hardness: First Direction}\label{proof:theorem:minization-hard-fixed-k:direction1}

First, assume there exists a biclique-edge-cover $C = \{C_1,...,C_c\}$ with $c > 1$ many bicliques. We proceed by constructing a $2$-BW-NFA $A = (Q,\Sigma,\delta,q_0,F)$ with $n = 4(c-1)$ states $Q = \{q_0,...,q_{n-1}\}$. We proceed inductively through $C$. 

For the first biclique $C_1$, we consider the states $q_0$ to $q_3$. For each $a\in C_1\cap V_1$, add the transition $q_1 \in \delta(q_0,a)$. Next, for each $b\in C_1\cap V_2$, add the transition $q_3 \in \delta(q_1,b)$. Additionally, set $q_3\in F$ and add the $\varepsilon$-transition $q_2\in \delta(q_0,\varepsilon)$. The creates a gadget where we can either read words that represent the clique $C_1$ by taking a transition to $q_1$ from $q_0$, or we can decide to go to the state $q_2$ and read nothing for now. The state $q_2$ will be used to connect to all subsequent gadgets. See Figure~\ref{fig:1-proof:theorem:minization-hard-fixed-k} for a visualisation.

\begin{figure}[h!]
    \centering

    \resizebox{0.8\textwidth}{!}{
        \begin{tikzpicture}[>=Stealth, shorten >=1pt, auto]
            \tikzset{
                state/.style={circle, draw=black, thick, minimum size=9mm, inner sep=0pt, font=\small},
                ystate/.style={state, fill=stateyellow},
                every loop/.style={min distance=10mm}
            }
            
            \node[ystate, initial, initial text=] (q0) {$q_0$};
            \node[ystate, right=1.5cm of q0] (q1) {$q_1$};
            \node[ystate, right=1.5cm of q1] (q3) {$q_3$};
            \node[ystate, accepting, right=1.5cm of q3] (q4) {$q_4$};
        
            \path[->]
            (q0) edge[bend left=25, red, very thick]
              node[above] {$C_1 \cap V_1$} (q1)
            (q1) edge[bend left=25, blue, very thick]
              node[above] {$C_1 \cap V_2$} (q4)
            (q0) edge[bend right=25, edgepurple, very thick]
              node[below] {$\varepsilon$} (q3);
        \end{tikzpicture}
    }
    
    \caption{Gadget for the first biclique $C_1$.}
    \label{fig:1-proof:theorem:minization-hard-fixed-k}
\end{figure}

For all but the last two bicliques $C_i\in C$ with $i \leq c-2$, we proceed inductively like follows (see Figure~\ref{fig:2-proof:theorem:minization-hard-fixed-k} for a visualisation). Let $j = 4\cdot(i-1)$ be the number that refers to the initial state of the gadget for $C_i$. For example, if $i = 2$ is the second considered biclique, then $j = 4$ is the initial state of the second gadget, as only $q_0$ to $q_3$ have been considered before. First, we construct the path that represents the biclique $C_i$. For each $a\in C_i\cap V_1$, we add the transition $q_{j+1} \in \delta(q_j,a)$. Next, for each $b\in C_i\cap V_2$, we add the transition $q_{j+3} \in \delta(q_{j+1},b)$. Additionally, we set $q_{j+3}\in F$ to be a final state and the add the $\varepsilon$-transition $q_{j+2} \in \delta(q_j,\varepsilon)$. Up to now, we constructed the gadget analogously to the first one. What remains is to connect this gadget to the previous one. We add a $\varepsilon$-transition $q_j \in \delta(q_{j-1},\varepsilon)$ which connects to the $\varepsilon$ path constructed by the previous gadgets up to the initial state $q_0$.

\begin{figure}[h!]
    \centering

    \resizebox{1\textwidth}{!}{
        \begin{tikzpicture}[>=Stealth, shorten >=1pt, auto]
          \tikzset{
            state/.style={circle, draw=black, thick, minimum size=8.5mm, inner sep=0pt, font=\small},
            ystate/.style={state, fill=stateyellow},
            every loop/.style={min distance=10mm}
          }
        
          \node[state] (jm4) {$q_{j-4}$};
          \node[state, right=0.8 of jm4] (jm3) {$q_{j-3}$};
          \node[state, right=0.8cm of jm3] (jm2) {$q_{j-2}$};
          \node[state, accepting, right=0.8cm of jm2] (jm1) {$q_{j-1}$};
        
          \node[ystate, right=0.8cm of jm1] (j) {$q_j$};
          \node[ystate, right=0.8cm of j] (jp1) {$q_{j+1}$};
          \node[ystate, right=0.8cm of jp1] (jp2) {$q_{j+2}$};
          \node[ystate, accepting, right=0.8cm of jp2] (jp3) {$q_{j+3}$};
        
          \path[->]
            (jm4) edge[bend left=35, black, thick]
              node[above] {$C_{i-1}\cap V_1$} (jm3)
            (jm3) edge[bend left=35, black, thick]
              node[above] {$C_{i-1}\cap V_2$} (jm1)
            (jm4) edge[bend right=35, black, thick] 
              node[below] {$\varepsilon$} (jm2)
            (jm2) edge[bend right=35, edgepurple, very thick]
              node[below] {$\varepsilon$} (j)
            (j) edge[bend left=35, red, very thick]
              node[above] {$C_i\cap V_1$} (jp1)
            (j) edge[bend right=35, edgepurple, very thick]
              node[below] {$\varepsilon$} (jp2)
            (jp1) edge[bend left=35, blue, very thick]
              node[above] {$C_i\cap V_2$} (jp3);
        \end{tikzpicture}
    }
    
    \caption{Gadget for a middle biclique $C_i$.}
    \label{fig:2-proof:theorem:minization-hard-fixed-k}
\end{figure}

Finally, for the last two bicliques $C_{c-1}$ and $C_{c}$, we construct a combined gadget that also only uses $4$ states (see Figure~\ref{fig:3-proof:theorem:minization-hard-fixed-k} for a visualisation). Let $j = 4\cdot(c-2)$ be the numbers that refers to the initial state of the final gadget. For example, if $c = 3$, then $j = 4(3-2) = 4$. For all letters $a\in C_{c-1}\cap V_1$, we add a transition $q_{j+1} \in \delta(q_j,a)$, and for all letters $a'\in C_c\cap V_1$, we add a transition $q_{j+2}\in\delta(q_j,a)$. Next, for all letters $b\in C_{c-1}\cap V_2$, we add a transition $q_{j+3}\in\delta(q_{j+1},b)$, and for all letters $b'\in C_{c}\cap V_2$, we add a transition $q_{j+3}\in\delta(q_{j+2},b')$. Finally, we set $q_{j+3}\in F$ to be a final state and if $c > 2$, we add the connecting $\varepsilon$-transition to the previous gadget $q_i \in \delta(q_{j-2},\varepsilon)$.

\begin{figure}[h!]
    \centering

    \resizebox{1\textwidth}{!}{
        \begin{tikzpicture}[>=Stealth, shorten >=1pt, auto, node distance=1.95cm]
          \tikzset{
            state/.style={circle, draw=black, thick, minimum size=8.5mm, inner sep=0pt, font=\small},
            ystate/.style={state, fill=stateyellow},
            every loop/.style={min distance=10mm}
          }
        
          \node[state] (jm4) {$q_{j-4}$};
          \node[state, right=0.8cm of jm4] (jm3) {$q_{j-3}$};
          \node[state, right=0.8cm of jm3] (jm2) {$q_{j-2}$};
          \node[state, accepting, right=0.8cm of jm2] (jm1) {$q_{j-1}$};
        
          \node[ystate, right=0.8cm of jm1] (j) {$q_j$};
          \node[ystate, right=0.8cm of j] (jp1) {$q_{j+1}$};
          \node[ystate, right=0.8cm of jp1] (jp2) {$q_{j+2}$};
          \node[ystate, accepting, right=0.8cm of jp2] (jp3) {$q_{j+3}$};
        
          \path[->]
            (jm4) edge[bend left=35, black, thick]
              node[above=] {$C_{c-2}\cap V_1$} (jm3)
            (jm3) edge[bend left=35, black, thick]
              node[above] {$C_{c-2}\cap V_2$} (jm1)
            (jm4) edge[bend right=35, black, thick] 
              node[below] {$\varepsilon$} (jm2)
            (jm2) edge[bend right=35, edgepurple, thick]
              node[below] {$\varepsilon$} (j)
        
            (j) edge[bend left=35, red, very thick]
              node[above] {$C_{c-1}\cap V_1$} (jp1)
            (j) edge[bend right=35, red, very thick]
              node[below] {$C_c\cap V_1$} (jp2)
        
            (jp1) edge[bend left=35, blue, very thick]
              node[above] {$C_{c-1}\cap V_2$} (jp3)
            (jp2) edge[bend right=35, blue, very thick]
              node[below] {$C_c\cap V_2$} (jp3);
        \end{tikzpicture}
    }
    
    \caption{Gadget for the final gadget for $C_{c-1}$ and $C_{c}$.}
    \label{fig:3-proof:theorem:minization-hard-fixed-k}
\end{figure}

Now we have to show that $L(A) = L_f$. Let $u\in L_f$. Then $u = ab$ for some $a\in V_1$ and $b\in V_2$ such that $\{a,b\}\in E$. By the biclique-edge-covering $C$ we know that $a,b\in C_i$ for some $i\in[c]$. We can read the word $\varepsilon$ from $q_0$ to the state $q_{4(i-2)}$. As $c \geq 2$, this means we either go to the beginning of some gadget created further on or we stay in $q_0$. Then, from there on, we can read the word $ab$ and reach a final state. For the other direction, assume $u\in L(A)$ with $u = ab$, $a\in V_1$, $b\in V_2$. By the construction, for all states that have an incoming transition labeled with a letter from $V_1$, we know that we have to read some letter from $V_2$ next and reach a final state. We know that we can only read these two transitions if there exists some biclique $C_i$ with the letters $a$ and $b$ in it. Therefore, there exists an edge $\{a,b\}\in E$ and therefore $ab= u \in L_f$. This concludes this direction of the proof. \qed

\subsection{NP-hardness: Second Direction}\label{proof:theorem:minization-hard-fixed-k:direction2}

Assume there exists a $2$-BW-NFA $A = \{Q,\Sigma,\delta,q_0,F\}$ such that $L(A) = L_f$. We note some facts about $A$ with respect to the accepted language $L_f$. As all words are of length $2$, no outgoing path after a from the initial state $q_0$ reachable final state $q_f\in F$ that leads to any other final state can read anything else than the empty word $\varepsilon$. Also, we do not have to consider states that are not reachable from $q_0$. Finally, we do not have to consider states have have no outgoing path to any final state. Clearly, removing unreachable states, outgoing $\varepsilon$ paths after final states, and states that lead to no final states, only reduces the state space. So, we can assume $A$ to be trimmed in a sense that no unreachable states exist in $Q$, that all final states have no outgoing transitions, and that no states with no outgoing paths to any final states remain in $A$. This does not change the accepted language and for all NFA that are not trimmed that way we can obtain a trimmed version in polynomial time.

For the remainder of this subsection, we define the following notions for easier readability and handling of a lot ob sub cases. For two states $p,q\in Q$ with $q \neq p$, we write $p \rightarrow q$ iff there exists a transition from $p$ to $q$ in $A$, i.e., such that $q\in\delta(p,a)$, for some $a\in\Sigma\cup\{\varepsilon\}$. We write $p \rightarrow_* q$ if there exists some path from $p$ to $q$ in $A$, i.e., there exists some $u\in\Sigma^*$ such that $q\in\hat{\delta}(q,u)$. Extending this notion to specific letters to words, we say $p \overset{a}{\rightarrow} q$ iff there is a transition labeled with $a$ between $p$ and $q$, i.e., there exists a letter $a\in\Sigma\cup\{\varepsilon\}$ such that $q\in\delta(p,u)$. Analogously, we write $p \overset{u}{\rightarrow}_* q$ iff there exists a word $u\in\Sigma^*$ such that $q \in \hat{\delta}(p,u)$. Additionally, for the remainder of this subsection, we define the following sets. Let 
$$\mathtt{In}(q) = \{\ u\in\Sigma^* \mid q\in\hat{\delta}(q_0,u)\ \}$$
be the set of all words that can reach the state $q$, starting from the initial state $q_0$. On the other hand, we define
$$\mathtt{Out}(q) = \{\ v\in\Sigma^* \mid q_f\in\hat{\delta}(q,v) \text{ and } q_f\in\ F\ \}$$
to be the set of all words that can rach some final state $q_f\in F$, starting from the state $q\in Q$. 

For $A$, we define a specific subset $M\subset Q$ of states that fulfill some very specific properties. In principle, they are the rightmost middle states from which, locally, the largest possible combination of input and output words (or rather for this proof, letters) can be read to reach a final state. As we will see, from $M$ we obtain the set of corresponding bicliques we can find in the graph $G$. Let $M\subset Q$ be defined as the maximum set of states $q\in Q$ with the following properties:
\begin{enumerate}
    \item The state $q$ is part of an accepting path, i.e., there exist some words $u,v\in\Sigma^*$ such that $q_0 \overset{u}{\rightarrow_*} q \overset{v}{\rightarrow_*} q_f$, for some final state $q_f\in F$,
    \item for all $u\in\Sigma^*$ with $q_0 \overset{u}{\rightarrow_*} q$ we have $u\in V_2$,
    \item for all $v\in\Sigma^*$ with $q \overset{v}{\rightarrow_*} q_f$, for some final state $q_f\in F$, we have $v\in V_2$,
    \item there exists no other state $q'\in M$ such that $q \rightarrow_* q'$ and for which 
    $$\mathtt{In}(q) \subset \mathtt{In}(q') \text { and } \mathtt{Out}(q) = \mathtt{Out}(q'),$$
    \item there exists no other state $q'\in M$ such that $q' \rightarrow_* q$ and for which 
    $$\mathtt{In}(q) = \mathtt{In}(q') \text { and } \mathtt{Out}(q) \subset \mathtt{Out}(q'),$$
    \item and there exists no other state $q'\in M$ such that $q \rightarrow_* q'$ and for which
    $$\mathtt{In}(q) = \mathtt{In}(p) \text{ and } \mathtt{Out}(q) = \mathtt{Out}(p).$$
\end{enumerate}

First of all, we confirm that form $M$ we actually obtain a biclique covering in $G$.

\begin{lemma}\label{proof:theorem:minization-hard-fixed-k:lemma:M-to-Bicliques}
    If $A$ emits such a set $M$, there exists a biclique-covering of $G$ with $|M|$ bicliques.
\end{lemma}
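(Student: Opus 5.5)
For each $q\in M$ I will use the biclique $C_q = \mathtt{In}(q)\cup\mathtt{Out}(q)$. Two points must be checked: each $C_q$ is a biclique of $G$, and the family $\{C_q \mid q\in M\}$ covers every edge. Property~2 as written says $u\in V_2$ for incoming words. Since every accepted word has the form $ab$ with $a\in V_1$, I read this as the intended condition $\mathtt{In}(q)\subseteq V_1$, with $\mathtt{Out}(q)\subseteq V_2$ by property~3. Under that reading, $C_q\cap V_1=\mathtt{In}(q)$ and $C_q\cap V_2=\mathtt{Out}(q)$.

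\textbf{Each $C_q$ is a biclique.} Take $a\in\mathtt{In}(q)$ and $b\in\mathtt{Out}(q)$. Then $q_0\overset{a}{\rightarrow_*}q\overset{b}{\rightarrow_*}q_f$ for some $q_f\in F$, so $ab\in L(A)=L_f$. By the definition of $L_f$ this means $\{a,b\}\in E$. All pairs between the two sides are therefore edges, and property~1 guarantees both sides are nonempty. Hence $C_q$ spans a complete bipartite subgraph of $G$.

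\textbf{Coverage: candidate states.} Let $\{a,b\}\in E$. Then $ab\in L(A)$, so there is an accepting path for $ab$. On this path, consider the state $p$ reached immediately after the transition reading $a$. Since $A$ is trimmed, $p$ lies on an accepting path. Every word leading into $p$ is a prefix of a word in $L_f$ that ends just after a letter of $V_1$. Every word leading from $p$ to a final state must complete such a prefix, so it is a single letter of $V_2$. The empty word is excluded here because $L_f$ has no words of length one. So $p$ satisfies properties~1--3, with $a\in\mathtt{In}(p)$ and $b\in\mathtt{Out}(p)$.

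Call a state satisfying properties~1--3 a \emph{candidate}. Candidates carry the partial order of domination: $p\preceq p'$ if $\mathtt{In}(p)\subseteq\mathtt{In}(p')$ and $\mathtt{Out}(p)\subseteq\mathtt{Out}(p')$, together with the reachability conditions of properties~4--6.

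\textbf{Coverage: passing to $M$.} Since $M$ is the maximal set of candidates not dominated in the sense of properties~4--6, I will follow a chain of dominations starting at $p$. Each step either strictly enlarges one of $\mathtt{In}$ or $\mathtt{Out}$, or moves strictly forward along $\rightarrow_*$ while both sets stay the same. The first two kinds of steps can occur only finitely often because $\Sigma$ is finite. The third kind is bounded because $Q$ is finite, provided no cycle occurs. The chain therefore ends at some $q\in M$ with $\mathtt{In}(p)\subseteq\mathtt{In}(q)$ and $\mathtt{Out}(p)\subseteq\mathtt{Out}(q)$. Consequently $a,b\in C_q$ and the edge $\{a,b\}$ is covered.

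\textbf{Main obstacle.} The termination argument is the delicate step. The reachability relation $\rightarrow_*$ could be cyclic, because the bandwidth definition works modulo $n$. I expect to rule this out for trimmed automata accepting $L_f$: a cycle through a candidate state would generate words longer than two, or would force two candidates with equal $\mathtt{In}$ and $\mathtt{Out}$ sets that reach each other. In the latter case property~6 must select exactly one representative, and I will argue it does. Once termination is secured, $\{C_q \mid q\in M\}$ is a biclique cover of $G$. If two states of $M$ happen to give the same biclique, the cover uses at most $|M|$ bicliques, which is enough for the lemma.
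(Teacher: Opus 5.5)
Your plan follows the paper's proof: the same $C_q=\mathtt{In}(q)\cup\mathtt{Out}(q)$, the same biclique check via $ab\in L(A)=L_f$, and coverage by taking a middle state on an accepting path of $ab$ and passing to $M$. Your domination chain is more careful than the paper, which only moves to ``the last state reachable with the same $\mathtt{In}$ and $\mathtt{Out}$'' and never handles properties~4 and~5. However, the step you flag as the main obstacle is a genuine gap, and the resolution you sketch fails. A $\rightarrow_*$-cycle through a candidate must read $\varepsilon$, since otherwise $aw^kb$ would be accepted for every $k$. So it creates no longer words, and such cycles exist because bandwidth is measured modulo $n$. Take $n=4$ with $q_0\overset{a}{\rightarrow}q_1\overset{b}{\rightarrow}q_2$, $q_2$ final, and $\varepsilon$-transitions $q_1\rightarrow q_3$ and $q_3\rightarrow q_1$ (legal, since $(1-3)\bmod 4=2$). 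This trimmed $2$-BW-NFA accepts $\{ab\}$, and $q_1\neq q_3$ are candidates with $\mathtt{In}=\{a\}$ and $\mathtt{Out}=\{b\}$ that reach each other. So your domination relation is only a preorder. With witnesses ranging over all candidates (your reading of $M$), property~6 disqualifies \emph{both} states, giving $M=\emptyset$ although $G$ has an edge. Nothing in property~6 selects a representative.

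Suppose instead you read $M$ literally as in the paper, with witnesses $q'\in M$ and $M$ maximal. Then property~6 does keep one of $q_1,q_3$, but your chain need not end in $M$, because maximality can retain a dominated state and drop its dominator. For example, take $n=4$, $q_0\overset{a}{\rightarrow}q_1\overset{\varepsilon}{\rightarrow}q_2$, $q_0\overset{a'}{\rightarrow}q_2$, and $q_2\overset{b}{\rightarrow}q_3$ with $q_3$ final. Here $\{q_1\}$ is an admissible set of maximum size, yet $C_{q_1}=\{a,b\}$ misses the edge $\{a',b\}$. So under neither reading does $\{C_q\mid q\in M\}$ cover $G$ in general, and the definition of $M$ itself has to be repaired before your argument can be finished. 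One repair: keep candidate-wide domination in properties~4 and~5, and replace property~6 as follows. Exclude $q$ if some candidate $q'$ with the same $\mathtt{In}$ and $\mathtt{Out}$ satisfies $q\rightarrow_* q'$ but $q'\not\rightarrow_* q$. Also exclude $q$ if it is not the designated member (say, the one of largest index) of its class of mutually $\rightarrow_*$-reachable candidates. Then each chain step either increases $|\mathtt{In}|+|\mathtt{Out}|$, advances in the acyclic condensation of $\rightarrow_*$, or jumps once to the representative. So the chain terminates in $M$, and the rest of your proof goes through. The paper's ``last state'' step tacitly assumes the same acyclicity, so it offers no way around this.
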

\begin{proof}
    First of all, for each $q\in M$, we define the set $C_q = \mathtt{In}(q) \cup \mathtt{Out}(q)$. For each pair of letters $a,b\subset C_q$, with $a\in V_1$ and $b\in V_2$, we know that $ab\in L(A)$ by the definition of $M$. Hence, there exists an edge $\{a,b\}\in E$. By that, we see that $C_q$ is a biclique in $G$. Let $C$ be the set of all bicliques obtained that way from $M$. Suppose, $C$ is not a biclique-edge-cover of $G$. Then, there exists some edge $\{a,b\}\in E$ not covered by $C$. Thus, there exists no biclique $C_q$ for which $a,b\in C_q$. As $\{a,b\}\in E$, we have $ab\in L_f$ and therefore $ab\in L(A)$ by definition. So, there exists some state $q\in Q$ with $q_0 \overset{a}{\rightarrow_*} q \overset{b}{\rightarrow_*} q_f$, for some final state $q_f\in F$. Take the last state $p\in Q$ reachable from $q$ with the same input and output sets, i.e., with $q \rightarrow_* p$ with $\mathtt{In}(q) = \mathtt{In}(p)$ and $\mathtt{Out}(q) = \mathtt{Out}(p)$. If such a state exists, we know $p\in M$ by definition of $M$. Otherwise, we know $q\in M$ by definition of $M$, as $M$ is chosen maximal. But then, the biclique $C_q \in C$ and by that $\{a,b\}\in E$ is covered by $C$, which is a contradiction. So, $C$ must be a biclique-edge-covering of $G$.\qed
\end{proof}

Hence, if $|M| \leq c$, then this direction is done. The remainder of this subsection shows that $|M| \leq c$ must indeed hold if $|Q| \leq 4(c-1)$. Suppose $|M| > c$. Assume w.l.o.g. that the states $Q = \{q_0, q_1, ... , q_{|Q-1|}\}$ are labeled with regards to the total order $<_Q$ imposed on $Q$ by $A$ (i.e., $q_0 < q_1 < q_2 < .... < q_{|Q-1|}$). As $|M| > c$ and $|Q| < 4(c-1)$, there must exist two pairs of states $q_{i_1}$, $q_{j_1}$ and $q_{i_2}$, $q_{j_2}$ with $i_1 < j_1 \leq i_2 < j_2$ such that
$$j_1 - i_1 < 4 \text{ and } j_2 - i_2 < 4$$
by the pigeonhole principle. Assume w.l.o.g. that $q_{i_1}$ and $q_{j_1}$ is the first pair of states that fulfill $j_1 - i_1 < 4$. Indeed, we can use the fact that these two states are so close to each other to show that after $j_1$, no other state $p$ with $q_{j_1} < p$ can exist, contradicting the existence of at least the second state of the pair $q_{i_2}$ and $q_{j_2}$. 

Suppose there exists some state $q_\ell\in M$ with $q_{j_1} < q_\ell$, i.e., $j_1 < \ell$. Notice that due to bandwidth $2$, all subsequent states after $q_{j_1}$ must be reachable from $q_{j_1}$ or $q_{j_1-1}$. It follows a comprehensive case distinction based on the possible distances of $q_{i_1}$ and $q_{j_1}$ and the fact whether $q_\ell$ is reachable only from $q_{j_1}$, only from $q_{j_1-1}$ or reachable from both. For readability purposes, in the following, we rename $i_1$ to $i$ and $j_1$ to $j$.

\begin{itemize}

\item[(1)] First, we can exclude one distance independently of the fact from which states $q_\ell$ is reachable. Suppose $j = i+2$, i.e., assume $q_i$ and $q_j$ only have one state between them. The following cases emerge:
\begin{itemize}
    \item[(1.1)] Suppose $q_i \rightarrow q_j$ but $q_i \not\rightarrow q_{i+1}$, i.e., that $q_i$ has a transition to $q_j$ but no transition to $q_{i+1}$. As $q_{i},q_{j}\in M$, we can only have $q_{i} \overset{\varepsilon}{\rightarrow} q_j$. But then, $\mathtt{In}(q_i) \subseteq \mathtt{In}(q_j)$, $\mathtt{Out}(q_i) = \mathtt{Out}(q_j)$ and $q_i \rightarrow_* q_j$. If $\mathtt{In}(q_i) \subset \mathtt{In}(q_j)$, then this is a contradiction to $q_i\in M$. If $\mathtt{In}(q_i) = \mathtt{In}(q_j)$, then this also is a contradiction to $q_i\in M$.
    \item[(1.2)] Suppose $q_i \overset{u}{\rightarrow} q_{i+1}$, for some $u\in\Sigma^*$ but $q_i \not\rightarrow q_j$, i.e., that $q_i$ has a transition to $q_{i+1}$ reading $u$, but no transition to $q_j$. Then, due to bandwidth $2$, there must exist a transition $q_{i+1} \overset{v}{\rightarrow} q_j$ reaching some $v\in\Sigma^*$. As this connects $q_i$ and $q_j$ via $q_{i+1}$, and as $q_i,q_j\in M$, we must have $u = v = \varepsilon$. But then, $\mathtt{q_i} \subseteq \mathtt{q_{i+1}}$, $\mathtt{Out}(q_i) = \mathtt{Out}(q_{i+1})$, and as $q_i \rightarrow q_{i+1}$, we cannot have $q_i\in M$.
    \item[(1.3)] So, suppose there exists both, $q_i \overset{u}{\rightarrow} q_{i+1}$ as well as $q_i \overset{v}{\rightarrow} q_j$. As settled before, as $q_i,q_j\in M$, we must have $v = \varepsilon$. Two further subcases emerge.
    \begin{itemize}
        \item[(1.3.1)] Suppose also $q_{i+1} \overset{w}{\rightarrow} q_j$. Then, we must have $u = w = \varepsilon$ due to $q_i,q_j\in M$. But then, again, $\mathtt{In}(q_i) \subseteq \mathtt{In}(q_{i+1})$ and $\mathtt{Out}(q_i) = \mathtt{Out}(q_{i+1})$. A contradiction to $q_i\in M$ as $q_i \rightarrow q_{i+1}$.
        \item[(1.3.2)] So, $q_{i+1} \not\rightarrow q_j$. Suppose there exists some $b\in\mathtt{Out}(q_i)$ with $v\notin\mathtt{Out}(q_j)$. As $q_i \overset{\varepsilon}{\rightarrow} q_j$ is the only incoming transition in $q_j$, due to bandwidth $2$, we have $\mathtt{In}(q_i) = \mathtt{In}(q_j)$. But then, $q_j\notin M$ as $\mathtt{Out}(q_j) \subset \mathtt{Out}(q_i)$. A contradiction. So $\mathtt{Out}(q_i) = \mathtt{Out}(q_j)$. But then, $q_i\notin M$, which is also a contradiction. So (iii) cannot hold. As (i), (ii), and (iii) cannot hold, (1) cannot be true, concluding this main case.
    \end{itemize}
\end{itemize}
Now, we proceed with a case distinction regarding the fact whether $q_\ell$ is reachable only from $q_j$, only from $q_{j-1}$, or from both, and the remaining distances $j = i+1$ and $j = i+3$.

\item[(2)] Assume that $q_\ell$ is only reachable from $q_j$ and not from $q_{j-1}$, i.e., $q_j \rightarrow^* q_\ell$ and $q_{j-1} \not\rightarrow_* q_\ell$. Then $q_j$ functions as a bottleneck of letters from $V_1$ read on all paths that lead to $q_\ell$. Hence $\mathtt{In}(q_j) = \mathtt{In}(q_\ell)$ and $\mathtt{Out}(q_\ell) \subseteq \mathtt{q_j}$. If $\mathtt{Out}(q_\ell) \subset \mathtt{q_j}$, this is a contradiction to $q_\ell\in M$. If $\mathtt{Out}(q_\ell) = \mathtt{q_j}$, we have a contradiction to $q_j\in M$. This concludes this case.

\item[(3)] Assume that $q_\ell$ is only reachable from $q_{j-1}$ and not from $q_j$, i.e., $q_{j-1} \rightarrow_* q_\ell$ and $q_j \not\rightarrow_* q_\ell$.
\begin{itemize}
    \item[(3.1)] If $j = i+1$, i.e., the states $q_i$ and $q_j$ follow immediately after one another. Then $q_{j-1} = q_{i}$ and therefore $q_{j-1}\in M$. Due to bandwidth $2$, we obtain the same reasoning as in \textbf{(2)} analogously. Hence, a contradiction.
    \item[(3.2)] If $j = i+3$, i.e., there are two other states between $q_i$ and $q_j$. As $q_{j-1} \rightarrow_* q_\ell$ and $q_j \not\rightarrow_* q_\ell$, there must exist a transition $q_{j-2} \rightarrow q_j$ due to the fact that $q_j$ must be reachable. The following cases emerge:
    \begin{itemize}
        \item[(3.2.1)] Suppose $q_i \rightarrow_* q_{j-1}$ and $q_i \not\rightarrow_* q_{i+1} = q_{j-2}$. Then, we can extend the argument from (2) or (3.1) analogously to the states $q_i$ and $q_{i+1} = q_{j-2}$. Hence, we obtain a contradiction here as well.
        \item[(3.2.2)] Suppose $q_i \rightarrow_* q_{i+1}=q_{j-2}$ and $q_i \not\rightarrow_* q_{i+2} = q_{j-1}$. Then $q_i$ cannot have any outgoing transition which is a contradiction.
        \item[(3.2.3)] So, assume $q_i \rightarrow_* q_{i+2} = q_{j-1}$ and $q_i \rightarrow_* q_{i+1} = q_{j-2}$. We obtain the following possible combinations.
        \begin{itemize}
            \item[(3.2.3.1)] Suppose $q_i \overset{u}{\rightarrow} q_{i+1}\overset{v}{\rightarrow} q_{i+2}$ and $q_i\not\rightarrow q_{i+2}$. As $q_{i+1} \rightarrow_* q_j$ and $q_j,q_i\in M$, we must have $u = \varepsilon$. But then $\mathtt{In}(q_i) \subseteq \mathtt{In}(q_{i+1})$ and $\mathtt{Out}(q_i) = \mathtt{Out}(q_{i+1})$. This is a contradiction to $q_i\in M$.
            \item[(3.2.3.2)] So, suppose $q_i \overset{u}{\rightarrow} q_{i+1}\overset{v}{\rightarrow} q_{i+2}$ and $q_i\overset{w}{\rightarrow} q_{i+2}$. As $q_\ell \in M$, $q_{i+2} = q_{j-1} \rightarrow_* q_\ell$ and as $q_j\in M$ and $q_{i+1} \rightarrow_* q_j$, we must have $u = v = w = \varepsilon$, as $q_i \in M$. Hence, we observe, again, that $\mathtt{In}(q_i) \subseteq \mathtt{In}(q_{i+1})$, $\mathtt{Out}(q_i) = \mathtt{Out}(q_{i+1})$, and $q_i \rightarrow q_{i+1}$. This is a contradiction to $q_i\in M$.
            \item[(3.2.3.3)] So, finally, suppose $q_i\overset{u}{\rightarrow} q_{i+2}$ and $q_{i+1}\overset{v}{\rightarrow} q_{i+2}$ but $q_i \not\rightarrow q_{i+1}$. Then, immediately, $\mathtt{In}(q_i) \subseteq \mathtt{In}(q_{i+2})$ and $\mathtt{In}(q_i) = \mathtt{In}(q_{i+1})$. This is a contradiction to $q_i\in M$ as $q_i \rightarrow q_{i+2}$. Hence, the case (3.2.3) cannot occur. But then, together with (3.2.1) and (3.2.2), we obtain that (3.2) cannot be true, concluding this main case.
        \end{itemize}
    \end{itemize}
\end{itemize}

Combining the results from (3.1), (3.2) and (1), we conclude that (3) cannot be true, concluding this main case.

\item[(4)] Hence, finally, suppose that $q_\ell$ is reachable from both $q_j$ and $q_{j-1}$, i.e., $q_{j-1} \rightarrow_* q_\ell$ and $q_{j} \rightarrow_* q_\ell$.

\begin{itemize}
    \item[(4.1)] First, suppose $j = i+1$, i.e. $q_i$ and $q_j$ immediately follow after one another. The following possible cases emerge. Notice that here we have to take the two following states after $q_j$ into account, i.e., $q_{j+1}$ and $q_{j+2}$.
    \begin{itemize}
        \item[(4.1.1)] Suppose $q_i \overset{u}{\rightarrow} q_j$ and $q_i \not\rightarrow q_{j+1}$. Then $u = \varepsilon$ as $q_i,q_j\in M$. But then, $\mathtt{In}(q_i) \subseteq \mathtt{In}(q_J)$ and $\mathtt{Out}(q_i) = \mathtt{Out}(q_J)$. This is a contradiction to $q_i\in M$.
        \item[(4.1.2)] Analogously, suppose $q_i \overset{u}{\rightarrow} q_{j+1}$ and $q_i \not\rightarrow q_j$. Then either $q_{j+1} = q_\ell$ or $q_{j+1}\rightarrow_* q_\ell$. No matter what, $u = \varepsilon$, as $q_i,q_\ell\in M$. But then $\mathtt{In}(q_i)\subseteq\mathtt{In}(q_{j+1})$ and $\mathtt{Out}(q_i) = \mathtt{Out}(q_{j+1})$. This is a contradiction to $q_i\in M$ as $q_i \rightarrow q_{j+1}$. So, by now, we must have $q_i \overset{u}{\rightarrow} q_j$ and $q_i \overset{u}{\rightarrow} q_{j+1}$.
        \item[(4.1.3)] Suppose $q_j \overset{u}{\rightarrow} q_{j+1}$ but $q_j \not\rightarrow q_{j+2}$. Then either $q_{j+1} = q_\ell$ or $q_{}j+1 \rightarrow_* q_\ell$. No matter what, analogously to (4.1.2) we obtain that neither $q_i$ or $q_j$ can be in $M$. A contradiction.
        \item[(4.1.4)] Suppose $q_j \overset{u}{\rightarrow} q_{j+2}$ but $q_j \not\rightarrow q_{j+1}$. Then $q_{j+1}$ only has an incoming transition from $q_i$ which has to be an $\varepsilon$ transition. The arguments as in (4.1.2) hold and we obtain a contradiction.
        \item[(4.1.5)] So, we must also have $q_j \overset{u}{\rightarrow} q_{j+1}$ and $q_j \overset{v}{\rightarrow} q_{j+2}$ together with  $q_i \overset{x}{\rightarrow} q_j$ and $q_i \overset{y}{\rightarrow} q_{j+1}$. First, as either $q_{j+1}$ or a later state must be $q_\ell$, we have $u = x = y = \varepsilon$. We observe that $\mathtt{In}(q_i) \subseteq \mathtt{In}(q_j)$ and $\mathtt{In}(q_j) = \mathtt{q_{j+1}}$. We see that $\mathtt{Out}(q_i) = \mathtt{Out}(q_j) \cup \mathtt{Out}(q_{j+1})$. We also see that $\mathtt{Out}(q_j) \supseteq \mathtt{Out}(q_{j+1}) \cup \mathtt{Out}(q_{j+2})$. Hence, plugging that in, we obtain 
        \begin{align*}
            \mathtt{Out}(q_i) &= \mathtt{Out}(q_j) \cup \mathtt{Out}(q_{j+1}) \\
                              & \supseteq \mathtt{Out}(q_{j+1}) \cup \mathtt{Out}(q_{j+2}) \cup \mathtt{Out}(q_{j+1}) \\
                              &= \mathtt{Out}(q_{j+1}) \cup \mathtt{Out}(q_{j+2}) \\
                              &= \mathtt{Out}(q_j).
        \end{align*}
        Suppose there exists some letter $b\in \mathtt{Out}(q_i)$ with $b\notin \mathtt{q_j}$. Then $x = b$ or $y = b$ as $q_i$ and $q_j$ share the output words of the later states. This is a contradiction to $x = y ? \varepsilon$. So $\mathtt{Out}(q_i) = \mathtt{Out}(q_j)$. But then, as $q_i \rightarrow q_j$, we have a contradiction to $q_i\in M$. Hence, no case from (4.1.1) to (4.1.5) can be true. Thus, (4.1) cannot be true.
    \end{itemize}
    \item[(4.2)] So, finally, suppose $j = i+3$, i.e., there exist two distinct states between $q_i$ and $q_j$. The following possible cases emerge. Recall that both states, $q_j$ and $q_{j-1}$ have an outgoing path to the state $q_\ell$ in the current case (4). For readability purposes, we call the states between $q_i$ and $q_j$ by $p_1$ and $p_2$ with $q_i < p_1 < p_2 < q_j$. Two main cases emerge. Either $p_2 \rightarrow q_j$ or $p_2 \not\rightarrow q_j$. Notice that $p_2 = q_{j-1}$ and therefore $p_2 \rightarrow_* q_\ell$ and $q_j \rightarrow_* q_\ell$ by assumption in these cases.
    \begin{itemize}
        \item[(4.2.1)] Suppose $p_2 \rightarrow q_j$. The following sub-cases emerge.
        \begin{itemize}
            \item[(4.2.1.1)] Suppose $q_i \not\rightarrow_* q_j$. Then, by bandwidth $2$, we must have $q_{i-1} \rightarrow_* q_j$. As $q_i \not\rightarrow_* q_j$, we must have $q_{i-1} \rightarrow p_1$ and $p_1 \rightarrow_* q_j$. But then $p_1 \rightarrow_* q_j$ and $p_2 \rightarrow q_j$ by assumption. As $q_i\in M$, it needs some outgoing transition. But any outgoing transition would go into $p_1$ or $p_2$ due to bandwidth $2$. A contradiction. So, $q_i \rightarrow_* q_j$.
            \item[(4.2.1.2)] Suppose $q_i \not\rightarrow p_2$. Then $q_i \rightarrow p_1$. From (4.2.1.1), we obtain $p_1 \rightarrow_* q_j$. As $q_i,q_j\in M$, we must have $q_1 \overset{\varepsilon}{\rightarrow} p_1$. But then $\mathtt{In}(q_i) \subseteq \mathtt{In}(p_1)$ and $\mathtt{Out}(q_i) = \mathtt{Out}(p_1)$. As $q_i \rightarrow p_1$, this is a contradiction to $q_i\in M$. So, we have $q_i \rightarrow p_2$.
            \item[(4.2.1.3)] Next, suppose $p_1 \not\rightarrow q_j$. As $q_i \rightarrow_* q_j$, $q_i \rightarrow p_2$, $p_2 \rightarrow q_j$, and $q_i,q_j\in M$, we have $q_i \overset{\varepsilon}{\rightarrow_*} p_2$. Suppose, we do not have $p_1 \rightarrow p_2$. Then $p_2$ has only one incoming $\varepsilon$-transition from $q_i$. As $q_j\in M$ and $q_i \rightarrow p_2 \rightarrow q_j$, we must also have $p_2 \overset{\varepsilon}{\rightarrow_*} q_j$. But then $\mathtt{In}(q_i) = \mathtt{In}(q_j)$ and $\mathtt{Out}(q_j) \subseteq \mathtt{Out}(q_i)$. But then, by the definition of $M$, we obtain: If $\mathtt{Out}(j) \subset \mathtt{Out}(q_i)$, then $q_j\notin M$. If on the other hand $\mathtt{Out}(q_j) = \mathtt{Out}(q_i)$, then $i\notin M$. Both are a contradiction. Hence, we must have $p_1 \rightarrow p_2.$. As $p_1 \rightarrow p_2$ is the only outgoing transition from $p_1$, and it reads an $\varepsilon$ as $q_j\in M$ and $2 \rightarrow q_j$, we must have $\mathtt{In}(q_i)\subseteq \mathtt{In}(2)$ and $\mathtt{Out}(q_i) = \mathtt{Out}(p_1) = \mathtt{Out}(p_2)$. This is a contradiction to $q_i\in m$ as $q_i \rightarrow p_2$. Hence, the assumption of this case cannot hold and, thus, we must have $p_1 \rightarrow q_j$.
            \item[(4.2.1.4)] Finally, consider the states $q_i$ and $p_1$.

                (I) Suppose there exists a non-$\varepsilon$-transition $q_i \overset{u}{\rightarrow} p_1$. As $i\in M$, we must have $u\notin V_1$. However, as $p_1 \rightarrow_* q_j$, we can also not have $u\in V_2$. Therefore this transition cannot exist.\\
                (II) Suppose there exists no transition between $q_i$ and $p_1$, i.e., $q_i \not\rightarrow p_1$. Then, again, $p_2$ is the only immediatly reachable state from $q_i$ and as established before, that transition from $q_i$ to $p_2$ can only read $\varepsilon$. This results in $\mathtt{In}(q_i) \subseteq \mathtt{In}(p_2)$ and $\mathtt{Out}(q_i) = \mathtt{Out}(p_2)$. A contradiction to $q_i\in M$.\\
                (III) So, suppose there exists an $\varepsilon$-transition from $q_i$ to $p_1$, i.e., $q_i \overset{\varepsilon}{\rightarrow} p_1$. Then $\mathtt{Out}(q_i) = \mathtt{Out}(p_1) \cup \mathtt{Out}(p_2)$. Note that $\mathtt{Out}(p_1) = \mathtt{Out}(p_2)\cup\mathtt{Out}(q_j)$. We obtain
                \begin{align*}
                    \mathtt{Out}(q_i) &= \mathtt{Out}(p_1) \cup \mathtt{Out}(p_2)\\
                                      &= \mathtt{Out}(p_2)\cup\mathtt{Out}(q_j)\cup \mathtt{Out}(p_2) \\
                                      &= \mathtt{Out}(p_2)\cup\mathtt{Out}(q_j) \\
                                      &= \mathtt{Out}(p_1).
                \end{align*}
                Also, we have $\mathtt{In}(q_i) \subseteq \mathtt{In}(p_1)$ as $q_i \rightarrow p_1$. But then, again, we have a contradiction to $q_i\in M$. Thus, no combination regarding (4.2.1.4) works out. Hence, the whole sub-case (4.2.1) collapses and we must have $p_2 \not\rightarrow q_j$.

        \end{itemize}
        \item[(4.2.2)] So, from now on, assume $p_2 \not\rightarrow q_j$. As $A$ has bandwidth $2$ and $q_j$ needs an incoming transition, this means, that there must exist a transition $p_1 \rightarrow q_j$. We continue with a similar case distinction as in (4.2.1).
        \begin{itemize}
            \item[(4.2.2.1)] Suppose $q_i \not\rightarrow p_1$, i.e., there is no transition from $q_i$ to $p_1$. Then there only exist transitions from $q_i \rightarrow p_2$, due to bandwidth $2$. As $p_2 \rightarrow_* q_ell$ by assumption and as $q_i,q_\ell\in M$, we must have $q_i \overset{\varepsilon}{\rightarrow_*} p_2$. But then, again, $\mathtt{In}(q_i) \subseteq \mathtt{In}(p_2)$ and $\mathtt{Out}(q_i) = \mathtt{Out}(p_2)$. Hence, as $q_i \rightarrow p_2$, this is a contradiction to $q_i\in M$. Thus, we must have $q_i \rightarrow p_1$ and as $p_1 \rightarrow q_j$, we must have $q_i \overset{\varepsilon}{\rightarrow} q_i \overset{\varepsilon}{\rightarrow} q_j$.
            \item[(4.2.2.2)] Suppose $q_i \not\rightarrow p_2$. Then there must exist a transition $p_1 \rightarrow p_2$ as $A$ has no unreachable states and the bandwidth is bound by $2$. As $q_i\in M$, $q_i \rightarrow p_1 \rightarrow p_2 \rightarrow_* q_\ell$, and $q_\ell\in M$, we must have $p_1 \overset{\varepsilon}{\rightarrow} p_2$. We observe $\mathtt{In}(q_i) \subseteq \mathtt{In}(p_1)$ and $\mathtt{Out}(q_i) = \mathtt{Out}(p_1)$, which again is a contradiction to $q_i\in M$. Hence, $q_i \rightarrow p_2$.
            \item[(4.2.2.3)] Finally, suppose $p_1 \rightarrow p_2$. Then, $p_1 \overset{\varepsilon}{\rightarrow} p_2$ as $q_i\in M$ and $q_\ell\in M$ and $q_i \rightarrow p_1 \rightarrow p_2 \rightarrow_* q_\ell$. But then $\mathtt{Out}(q_i) = \mathtt{Out}(p_1) \cup \mathtt{Out}(p_2)$ and $\mathtt{Out}(p_1) = \mathtt{Out}(p_2) \cup \mathtt{Out}(q_j)$. This results in
            \begin{align*}
                \mathtt{Out}(q_i) &= \mathtt{Out}(p_1) \cup \mathtt{Out}(p_2) \\
                                 &= \mathtt{Out}(p_2) \cup \mathtt{Out}(q_j) \cup \mathtt{Out}(p_2) \\
                                 &= \mathtt{Out}(p_2) \cup \mathtt{Out}(q_j) \\
                                 &= \mathtt{Out}(p_1).
            \end{align*}
            As we also have $\mathtt{In}(q_i) = \mathtt{In}(p_1)$, we again have a contradiction to $q_i\in M$. Hence, $p_1 \not\rightarrow p_2$
        \end{itemize}
        But then, as $q_i \overset{\varepsilon}{\rightarrow} p_1 \overset{\varepsilon}{\rightarrow} q_j$ and $p_2 \not\overset{\varepsilon}{\rightarrow} q_j$, we have $\mathtt{In}(q_i) = \mathtt{In}(q_j)$ and $\mathtt{Out}(q_j) \subseteq \mathtt{Out}(q_i)$ as well as $q_i \rightarrow_* q_j$. If $\mathtt{Out}(q_j) \subset \mathtt{Out}(q_i)$, we have a contradiction for $q_j\in M$. Otherwise, if $\mathtt{Out}(q_j) = \mathtt{Out}(q_i)$, we have a contradiction for $q_i\in M$. A contradiction in general in this case. So (4.2.2) cannot hold and we must have $p_2 \rightarrow q_j$. 
    \end{itemize}
    The in (4.2.2) established fact that $p_2 \rightarrow q_j$ is a contradiction to the in (4.2.1) established fact that $p_2 \not\rightarrow q_j$ must be the case. Hence, we obtain a general contradiction if $j = i+3$ in the case of $q_\ell$ being reachable from $q_j$ and $q_{j-1}$. Together with (4.1) and (1), the case of $q_\ell$ being reachable from both states $q_j$ and $q_{j-1}$ leads overall to a contradiction if $j - i < 4$.
\end{itemize}
\end{itemize}
So, for all considered cases, bringing together $(1)$, $(2)$, $(3)$, and $(4)$, we see that if $j-i < 4$, then the state $q_\ell\in M$ with $\ell > j$ cannot exist. This is a a contradiction to the existence of $q_{j_2}$ in the second pair of close by states established before the case distinction. Thus, we must have $|M| \leq c$. With Lemma~\ref{proof:theorem:minization-hard-fixed-k:lemma:M-to-Bicliques} we obtain that $G$ must have a biclique-edge-covering $C$ with at most $|C| = |M| \leq c$ many bicliques. This concludes this direction of the proof. \qed

\end{document}